\theoremstyle{plain}
\newtheorem{proposition}{Proposition}
\newtheorem{theorem}{Theorem}
\newtheorem{lemma}{Lemma}
\newtheorem{corollary}{Corollary}
\theoremstyle{definition}
\newtheorem{thm}{Theorem}
\newenvironment{thmbis}[1]
  {%
   \addtocounter{thm}{-1}%
   \begin{thm}}
  {\end{thm}}
\newcommand{\Ex}[1]{\mbox{\rm\bf E}\left[#1\right]}
\begin{document}

\title{Algorithms as Mechanisms:\\ The Price of Anarchy of Relax-and-Round}

\author{%
	Paul D\"{u}tting\thanks{Department of Computer Science, ETH Z\"urich, Universit\"atstrasse 6, 8092 Z\"urich, Switzerland, Email: \texttt{paul.duetting@inf.ethz.ch}.}
	\and
	Thomas Kesselheim\thanks{Max-Planck-Institut f\"ur Informatik, Campus E1 4, 66123 Saarbr\"ucken, Germany, Email: \texttt{thomas.kesselheim@mpi-inf.mpg.de}.}
	\and
	\'{E}va Tardos\thanks{Department of Computer Science, Cornell University, Gates Hall, Ithaca, NY 14853, USA, Email: \texttt{eva@cs.cornell.edu}.}
}

\maketitle

\begin{abstract}
Many algorithms that are originally designed without explicitly considering incentive properties are later combined with simple pricing rules and used as mechanisms.  The resulting mechanisms are often natural and simple to understand. But how good are these algorithms as mechanisms? Truthful reporting of valuations is typically not a dominant strategy (certainly not with a pay-your-bid, first-price rule, but it is likely not a good strategy even with a critical value, or second-price style rule either).  Our goal is to show that a wide class of approximation algorithms yields this way mechanisms with low Price of Anarchy.

The seminal result of Lucier and Borodin \cite{LucierBorodin10} shows that combining a greedy algorithm that is an $\alpha$-approximation algorithm
with a pay-your-bid payment rule yields a mechanism whose Price of Anarchy is $O(\alpha)$. In this paper we significantly extend the class of algorithms for which such a result is available by showing that this close connection between approximation ratio on the one hand and Price of Anarchy on the other also holds for the design principle of {\em relaxation and rounding} provided that the relaxation is {\em smooth} and the rounding is {\em oblivious}.

We demonstrate the far-reaching consequences of our result by showing its implications for sparse packing integer programs, such as multi-unit auctions and generalized matching, for the maximum traveling salesman problem, for combinatorial auctions, and for single source unsplittable flow problems. In all these problems our approach leads to novel simple, near-optimal mechanisms whose Price of Anarchy either matches or beats the performance guarantees of known mechanisms.
\end{abstract}

\section{Introduction}

Mechanism design---or ``reverse'' game theory---is concerned with protocols, or mechanisms, through which potentially selfish agents interact with one another. The basic assumption is that the data is held by the agents, who may behave strategically. The goal is then to achieve outcomes that approximate the social optimum in a wide range of strategic equilibria.

The most sweeping positive result that one could hope for in this context---with some professional bias of course---is a general reduction from mechanism design to algorithm design, showing that mechanism design is just as ``easy'' as algorithm design. Specifically, one could hope that using algorithms as they are and charging bidders their respective bids yields mechanisms whose equilibria are close to optimal.

Why would this be appealing? Such a result would make the entire toolbox of algorithm design available to mechanism design, significantly broadening the tools currently available. It would also ``hide'' the incentives aspect from the designer, who would then no longer need to worry about possible manipulations through the agents. He could simply focus on the problem of computing near optimal solutions for the claimed
input. Finally, the resulting mechanisms would enjoy a simplicity well beyond that found in most state-of-the art mechanisms.

Our goal in this paper is to identify general algorithm design principles that work well when used {\em as} mechanisms. We cannot expect this to be the case for all algorithms. Identifying algorithm design principles that automatically work well as mechanisms would, in some sense, give us the vocabulary to which we---as algorithm designers---should confine ourselves if we expect that our algorithm will be used in strategic environments. Specifically, it would equip us with the tools to design {\em simple} and {\em robust} mechanisms for such settings.

Borodin and Lucier \cite{LucierBorodin10} showed that greedy algorithms have this property: Any equilibrium of a greedy algorithm that is an $\alpha$-approximation algorithm is within $O(\alpha)$ of the optimal solution. Our main result is to show that the common design principle of {\em relaxation and rounding} also preserves the approximation guarantee as Price of Anarchy guarantee provided that the relaxation is {\em smooth} and the rounding process is {\em oblivious} (more on this below). The canonical example of this approach are integer linear programs that are relaxed to a fractional domain, then the relaxation is  solved to optimality and converted into an integer solution via randomized rounding. Other examples that follow this pattern come from semi-definite programming or involve relaxing one combinatorial problem to another.

Our result has---as we show---far-reaching consequences in mechanism design: It leads to novel simple, yet near-optimal mechanisms for sparse packing integer programs, such as multi-unit auctions and generalized matching, for the maximum traveling salesman problem, for combinatorial auctions and for single source routing problems. In all cases we obtain Price of Anarchy bounds that match or beat known Price of Anarchy guarantees, or they are the first non-trivial guarantees for the respective problem.

\subsection{Our Contributions}

Our results concern the algorithmic blueprint of {\em relaxation and rounding} (see, e.g., \cite{Vazirani2001}). In this approach a problem $\Pi$ is {\em relaxed} to a problem $\Pi'$, with the purpose of rendering exact optimization computationally tractable. Having found the optimal relaxed solution $x'$, another algorithm derives a solution $x$ to the original problem. This process is called {\em rounding}.

Many rounding schemes in text books as well as highly sophisticated ones are \emph{oblivious}. That is, they do {\em not} require knowledge of the objective function. Up to this point, to the best of our knowledge, this property---though wide-spread---has never proven useful. In this paper, we show that oblivious rounding schemes preserve bounds on the Price of Anarchy. That is, applying an $\alpha$-approximate oblivious rounding scheme on a problem with a Price of Anarchy bound $\beta$, the combined mechanism has Price of Anarchy at most $O(\alpha \beta)$.

We thus translate the relax-and-round approach from algorithm design into mechanism design: If we relax a problem into a problem with Price of Anarchy $\beta$ and round the solution to the relaxed problem with an $\alpha$-approximate oblivious rounding scheme, the resulting mechanism has a Price of Anarchy of $O(\alpha \beta)$. The bound does not only apply to Nash equilibria, but also extends to the Bayesian setting as well as to learning outcomes (coarse correlated equilibria). See Section \ref{sec:discussion} for discussion on the existence and computational complexity of finding such outcomes.

\subsubsection{Main Result} 
Our main result leverages the power of the smoothness framework of Roughgarden \cite{Roughgarden09,Roughgarden12} and Syrgkanis and Tardos \cite{SyrgkanisT13}.

At the heart of this framework is the notion of a $(\lambda,\mu)$-smooth mechanism, where $\lambda, \mu \ge 0$. The main result is that a mechanism that is $(\lambda,\mu)$-smooth achieves a Price of Anarchy of $\beta(\lambda,\mu) = \max(1,\mu)/\lambda$ with respect to a broad range of equilibrium concepts including learning outcomes. Furthermore, the simultaneous and sequential composition of $(\lambda,\mu)$-smooth mechanisms is again $(\lambda,\mu)$-smooth. Ideally, $\lambda = 1$ and $\mu \le 1$ in which case this result tells us that all equilibria of the mechanism are socially optimal; otherwise, if $\lambda < 1$ or $\mu > 1$, then this result tells us which fraction of the optimal social welfare the mechanism is guaranteed to get at any equilibrium.

The other crucial ingredient to our main result is the notion of an $\alpha$-{\em approximate oblivious rounding scheme}, where $\alpha \ge 1$. This is a (possibly randomized) rounding scheme for translating a solution $x'$ to the relaxed problem $\Pi'$ into a solution $x$ to the original problem $\Pi$ so that for all possible valuation profiles each agent is guaranteed to get, in expectation, a $1/\alpha$-fraction of the value that it would have had for the solution to the relaxed problem.

Clearly an $\alpha$-approximate oblivious rounding scheme, when combined with optimally solving the relaxed problem, leads to an approximation ratio of $\alpha$. We show that it also approximately preserves the Price of Anarchy of the relaxation. We focus on pay-your-bid mechanisms for concreteness. Our result actually applies to a broad range of mechanisms and can also be extended to include settings where the relaxation is not solved optimally; we discuss these extensions in Section \ref{sec:extensions}.

\begin{theorem}[Main Theorem, informal]
Consider problem $\Pi$ and a relaxation $\Pi'.$ Suppose the pay-your-bid mechanism $M$ for $\Pi$ is derived from the pay-your-bid mechanism $M'$ for $\Pi'$. If $M'$ is $(\lambda,\mu)$-smooth, then $M$ is $(\lambda/(2\alpha),\mu)$-smooth.
\end{theorem}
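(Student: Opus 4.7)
The plan is to lift the $(\lambda,\mu)$-smoothness of $M'$ to $M$ by using the obliviousness of the rounding to transfer both the smoothness deviation and the per-agent utility/payment estimates from $M'$ to $M$, at the cost of a factor proportional to $\alpha$ on the welfare side.

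Concretely, I would proceed in four steps. First, invoke the $(\lambda,\mu)$-smoothness of $M'$ to obtain, for every valuation profile $v$, (possibly randomized) deviations $\tilde b_i'(v)$ satisfying
\[
\sum_i \E\!\left[u_i^{M'}(\tilde b_i'(v), b_{-i})\right] \;\ge\; \lambda\,\mathrm{OPT}(v) \;-\; \mu\sum_i p_i^{M'}(b)
\]
for every bid profile $b$. Second, use these same bids as the candidate deviation in $M$, possibly mixed with a zero bid with probability $\tfrac12$ to tame the payment term. Because $M$ simply applies the oblivious rounding $R$ on top of the fractional allocation $x^{M'}(\tilde b_i'(v), b_{-i})$ produced internally by $M'$, the smoothness deviation transfers verbatim up to the rounding randomness. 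Third, apply the $\alpha$-approximate oblivious rounding guarantee with the true valuation $v_i$ to lower-bound the value component of $u_i^M$ at the deviation by $1/\alpha$ times the corresponding quantity in $M'$. Fourth, apply the same guarantee with the bid $b_i$ in the role of a valuation to relate $p_i^{M'}(b)$ on the right-hand side to the payments $\E[p_i^M(b)]$ actually charged by $M$ under the pay-your-bid rule.

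The principal obstacle, and the source of the extra factor of $2$, is the opposite-sign issue between the value and payment transfers. The oblivious guarantee $\E[v_i(x_i^M)] \ge v_i(x_i^{M'})/\alpha$ has the right sign when applied to valuations, but the same inequality with $v_i \leftarrow b_i$ yields $\E[p_i^M(b)] \ge p_i^{M'}(b)/\alpha$, which is the wrong direction for the $-\mu\sum_i p_i(b)$ term of the smoothness inequality. The cleanest resolution is either to exploit that standard oblivious rounding schemes are in fact unbiased, so that $\E[x_i^M(b)] = x_i^{M'}(b)/\alpha$ holds with equality and both transfers become lossless, or to randomize the $M$-deviation with a zero bid at probability $\tfrac12$, halving both the expected payment under the deviation and the expected value gain. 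Either device costs a factor $2$ on $\lambda$ while preserving $\mu$, producing the advertised $(\lambda/(2\alpha),\mu)$-smoothness, from which the smoothness framework yields a Price of Anarchy of $O(\alpha\,\beta(\lambda,\mu))$ for $M$ across all the equilibrium concepts covered by the framework.
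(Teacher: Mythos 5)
Your high-level plan (transfer the smoothness of $M'$ to $M$ via obliviousness, applied once with the true valuations and once with the bids) matches the paper's, but the execution has a genuine gap, and you misdiagnose where the factor $2$ comes from. First, there is no ``opposite-sign issue'' with the payment term: the paper divides the entire smoothness bound of $M'$ by $\alpha$ when transferring the deviation utilities, so the payment term appears as $-\frac{\mu}{\alpha}\sum_i p'_i(b)$, and obliviousness applied with $w_i=b_i$ gives exactly $p'_i(b)=b_i(f'(b))\le\alpha\,\Ex{b_i(f(b))}=\alpha\,\Ex{p_i(b)}$, which is precisely the direction needed and preserves $\mu$ losslessly. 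Your two proposed remedies do not repair anything real: assuming the rounding is unbiased is an extra hypothesis not in the theorem, and mixing the deviation with a zero bid cannot ``tame the payment term'' because the $-\mu\sum_i p_i(b)$ term in the smoothness definition is evaluated at the \emph{original} bid profile $b$, which the deviation does not touch; mixing with zero only halves the deviation utility and buys nothing.

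The step that actually fails in your argument is reusing the smoothness deviations $\tilde b_i'(v)$ of $M'$ verbatim in $M$. The deviator's utility in $M$ is $\Ex{v_i(f(\tilde b_i',b_{-i}))}-\Ex{\tilde b_i'(f(\tilde b_i',b_{-i}))}$, and obliviousness only gives lower bounds of the form $\Ex{w(f)}\ge\frac{1}{\alpha}w(f')$; it provides no upper bound on the deviator's own pay-your-bid payment $\Ex{\tilde b_i'(f(\tilde b_i',b_{-i}))}$ in terms of $\tilde b_i'(f'(\tilde b_i',b_{-i}))$, so you cannot conclude $u_i^M\ge\frac{1}{\alpha}u_i^{M'}$ for arbitrary deviations. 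The paper resolves this by first proving (Lemma~\ref{lem:half}) that, because $f'$ is an \emph{exact} declared-welfare maximizer, general $(\lambda,\mu)$-smoothness of $M'$ implies $(\lambda/2,\mu)$-smoothness for the specific deviation $b_i'=\frac12 v_i$; the argument runs through the VCG mechanism induced by $f'$ (truthfulness plus VCG payments being nonnegative and at most the bid), and this is the sole source of the factor $2$. For that particular deviation the pay-your-bid utility equals $\frac12\Ex{v_i(f(b_i',b_{-i}))}$, a nonnegative quantity with no separate payment term, so obliviousness with $w_i=v_i$ transfers it cleanly at cost $\frac{1}{\alpha}$ (Theorem~\ref{thm:smooth}). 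Your proposal never uses the exact-optimization hypothesis on $f'$, which is indispensable for this reduction, and so as written it does not establish the claimed $(\lambda/(2\alpha),\mu)$-smoothness.
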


\begin{corollary}
The Price of Anarchy established via smoothness of mechanism $M'$ of $\beta$ translates into a smooth Price of Anarchy bound for mechanism $M$ of $2\alpha\beta$ extending to both Bayesian Nash equilibria and learning outcomes.
\end{corollary}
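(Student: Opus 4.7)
The plan is to obtain the corollary as an essentially immediate consequence of the Main Theorem combined with the general smoothness-to-PoA machinery of Roughgarden and Syrgkanis--Tardos that is already invoked earlier in the excerpt. So the strategy is not to prove anything new, but to chase the smoothness parameters through these two black boxes and verify that the arithmetic gives exactly $2\alpha\beta$.

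Concretely, I would first assume that $M'$ admits a $(\lambda,\mu)$-smoothness proof with $\beta = \max(1,\mu)/\lambda$, which is the assumed form of the Price of Anarchy bound established via smoothness. By the Main Theorem, the induced mechanism $M$ for the original problem $\Pi$ is then $(\lambda/(2\alpha),\mu)$-smooth. Applying the framework result quoted in the excerpt, which says that a $(\lambda',\mu)$-smooth mechanism has Price of Anarchy at most $\max(1,\mu)/\lambda'$, directly with $\lambda' = \lambda/(2\alpha)$ yields
\[
\frac{\max(1,\mu)}{\lambda/(2\alpha)} \;=\; 2\alpha\cdot\frac{\max(1,\mu)}{\lambda} \;=\; 2\alpha\beta,
\]
which is exactly the desired bound on $M$.

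The extension to Bayesian Nash equilibria and to learning outcomes (coarse correlated equilibria) requires no additional work: the smoothness framework of Roughgarden and of Syrgkanis--Tardos is already designed so that a single $(\lambda,\mu)$-smoothness certificate simultaneously implies the same Price of Anarchy bound across pure Nash, mixed Nash, correlated, coarse correlated, and Bayes--Nash equilibria. Since the Main Theorem produces a genuine smoothness certificate for $M$ (rather than a direct argument about a particular equilibrium concept), these extensions come for free.

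In terms of where the real work sits, the corollary itself is a one-line computation; the substantive obstacle is entirely inside the Main Theorem, namely constructing, from the deviating bid that witnesses $(\lambda,\mu)$-smoothness for $M'$, a deviating bid for $M$ whose expected utility recovers a $1/(2\alpha)$ fraction of the contribution from the optimal integral solution. That construction presumably uses obliviousness of the rounding scheme to translate a smoothness-witnessing deviation in the relaxed problem into one in the original problem while losing only the factor $2\alpha$, but this is the burden of the theorem, not of the corollary.
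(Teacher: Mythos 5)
Your proposal is correct and matches the paper's own (implicit) reasoning: the corollary is obtained exactly by feeding the $(\lambda/(2\alpha),\mu)$-smoothness of $M$ from the Main Theorem into the Syrgkanis--Tardos result (Theorem~\ref{thm:vasilis-and-eva} and the remark following it), so the bound $\max(1,\mu)/(\lambda/(2\alpha)) = 2\alpha\beta$ holds for mixed Nash, Bayes--Nash, and learning outcomes. Nothing further is needed.
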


Our main theorem can be strengthened if the relaxation satisfies a slightly stronger smoothness condition, also parametrized by $\lambda$ and $\mu$, which all our application do. In this case we can show that the derived mechanism is $(\lambda/\alpha,\mu)$-smooth; and the corollary would read ``a Price of Anarchy of $\beta$ translates into a Price of Anarchy of $\alpha\beta$.''

\subsubsection{Applications}
We demonstrate the far-reaching consequences of our result by applying it to a broad range of optimization problems. For each of these problems we show the existence of a smooth relaxation and the existence of an oblivious rounding scheme. We note that in all of our applications, it is important to use the relaxation to show smoothness of the problem. For example, optimally solving the original (integer) problem would give a very high Price of Anarchy.

\paragraph{Sparse Packing Integer Programs.}
The first problem we consider are multi-unit auctions with $n$ bidders and $m$ items, where bidders have unconstrained valuations. The underlying optimization problem has a natural LP relaxation, which we show is $(1/2,1)$-smooth. Using the $8$-approximate oblivious rounding scheme of Bansal et al.~\cite{BansalKNS10}, our framework yields a constant PoA. This is quite remarkable as solving the integral optimization problem leads to a PoA that grows linearly in $n$ and $m$.

We then consider the generalized assignment problem in which $n$ bidders have unit-demand valuations for a certain amount of one of $k$ services and allocations of services to bidders must respect the limited availability of each service. For this problem we also show $(1/2,1)$-smoothness, and use the $8$-approximate oblivious rounding scheme of \cite{BansalKNS10} to obtain a constant PoA.

Both these results are in fact special cases of a general result regarding sparse packing integer programs (PIP) that we show. Namely, the pay-your-bid mechanism that solves the canonical relaxation of a PIP with column sparsity $d$ is $(1/2,d+1)$-smooth. Multi-unit auctions and the generalized assignment problem have $d =1$; combinatorial auctions in which each bidder is interested in at most $d$ items simultaneously have $d \ge 1$. For general PIPs the rounding scheme of \cite{BansalKNS10} is $O(d)$-approximate. We get a PoA of $O(d^2)$.

\paragraph{Maximum Traveling Salesman.}
Our second application is the maximization variant of the classic traveling salesman problem (max-TSP). We think of the problem as a game where each edge $e$ has a value for being included, and the goal of the mechanism is to select a TSP of maximum total value.
The classic algorithm for this problem is a $2$-approximation \cite{Fisher1979}. It proceeds by computing a cycle cover, dropping an edge from each cycle, and connecting the resulting paths in an arbitrary manner to obtain a solution.
We prove this can be thought off as a $2$-approximate oblivious rounding scheme and show, through a novel combinatorial argument, that the relaxation is $(1/2,3)$-smooth. We thus obtain a Price of Anarchy of $12$.

The best approximation guarantee for max-TSP is a $3/2$-approximation due to Kaplan et al.~\cite{KaplanLSS03}. The same approximation ratio is achieved by a (much simpler) algorithm of Paluch et al.~\cite{PaluchEZ12}. We show that this algorithm---just as the basic algorithm---can be interpreted as a relax-and-round algorithm. Generalizing the arguments for the basic algorithm to the (different) relaxation used in this interpretation, we show that this algorithm achieves a Price of Anarchy that is by a factor $3/4$ better than the Price of Anarchy of the basic algorithm.

These examples are especially interesting as they show how a seemingly combinatorial algorithm can be re-stated within our framework. They also represent the first non-trivial PoA bounds for this problem.

\paragraph{Combinatorial Auctions.}
We also consider the ``canonical'' mechanism design problem of combinatorial auctions in which valuations are restricted to come from a certain class. Our first result concerns fractionally subadditive, or XOS, valuations \cite{LehmannLN06}. We show that the pay-your-bid mechanism for the canonical LP relaxation is $(1/2,1)$-smooth. Using Feige's ingenious $e/(e-1)$-approximate oblivious rounding scheme \cite{Feige09}, our main result implies an upper bound on the Price of Anarchy of $4e/(e-1)$.

We then show how to extend this result to the recently proposed hierarchy of $\mathcal{MPH}$-$k$ valuations \cite{FeigeFIILS14}. Levels of the hierarchy correspond to the degree of complementarity in a given function. The lowest level $k=1$ coincides with the class of XOS/fractionally subadditive valuations; the highest level $k = m$ can be shown to comprise all monotone valuation functions. We show that for $\mathcal{MPH}$-$k$ valuations the LP relaxation is $(1/2,k+1)$ smooth. Together  with the $O(k)$-approximate oblivious rounding scheme of \cite{FeigeFIILS14} we obtain a Price of Anarchy of $\Theta(k^2)$.

These results nicely complement the recent work on ``simple auctions'' such as \cite{ChristodoulouKS08,BhawalkarR11,FeldmanFGL13,DuHeSt13,Roughgarden14}, answering an open question of Babaioff et al.~\cite{BabaioffLNL14} regarding the Price of Anarchy of {\em direct} mechanism based on approximation algorithms in these settings. 
The advantage of having a direct mechanism for this problem is that one can consider simple bidding strategies (such as bidding half the value) to establish the performance guarantees, whereas in indirect mechanisms such as combinatorial auctions with item bidding the computational effort is effectively shifted to the bidders.

\paragraph{Single Source Unsplittable Flow.}
The final problem that we consider are multi-commodity flow (MCF) problems with a single source (or  target). In these problems we are given a capacitated, directed network and a set of requests consisting of a target (a source) and a demand, corresponding to requests of, say different information, held at the source. The goal is to maximize the total demand routed (or the total value of the demand routed), subject to feasibility. We assume each player has a demand for some flow to be routed from a shared source to a terminal specific to the player, and the player has a private value for routing this flow.

For this problem we show that the natural LP relaxation is $(1/2,1)$-smooth. A $(1+\epsilon)$-approximate oblivious rounding scheme for high enough capacities is obtained through an adaptation of the ``original'' randomized rounding algorithm of \cite{Raghavan1988,RaghavanThompson1987}. This yields a PoA of $2(1+\epsilon).$

An interesting feature of this result is that the LP can be solved greedily through a variant of Ford-Fulkerson which allows us to exploit the known connection to smoothness \cite{LucierBorodin10,SyrgkanisT13}. Crucially, the reference to these results has to be on the fractional level, as a greedy procedure on the integral level achieves a significantly worse approximation guarantee.

\subsection{Related Work}

Our work is closely related to the literature on so-called ``back-box reductions'', which has led to some of the most impressive results in algorithmic mechanism design (such as \cite{LaviS05,BriestKV05,DughmiR14,DughmiRY11,BabaioffKS10,BabaioffKS13}). This approach takes an algorithm, and aims to implement the algorithm's outcome via a game. To this end it typically modifies the algorithm and adds a sophisticated payment scheme. Our approach is different in that we consider an algorithm without any modification, introduce a simple payment rule, such as the ``pay your bid'' rule, and understand the expected outcomes of the resulting game.

Lavi and Swamy \cite{LaviS05} use {\em randomized meta rounding} \cite{CarrVempala02} to turn LP-based approximation algorithms for packing domains into truthful-in-expectation mechanisms. Our result is similar in spirit as it demonstrates the implications of {\em obliviousness} for non-truthful mechanism design. The property that we need, however, is less stringent and shared by most rounding algorithms. Another important difference is that our approach is not limited to packing domains.

Briest et al.~\cite{BriestKV05} show how pseudo-polynomial approximation algorithms for single-parameter problems can be turned into a truthful fully polynomial-time approximation schemes (FPTAS). Dughmi et al.~\cite{DughmiR14} prove that every welfare-maximization problem that admits a FPTAS and can be encoded as a packing problem also admits a truthful-in-expectation randomized mechanism that is an FPTAS. Unlike our approach these approaches are limited to single-parameter problems, or to multi-parameter problems with packing structure.

Dughmi et al.~\cite{DughmiRY11} present a general framework that also looks at the fractional relaxation of the problem. They show that if the rounding procedure has a certain property, which they refer to as {\em convex rounding}, then the resulting algorithm is truthful. They instantiate this framework to design a truthful-in-expectation mechanism for CAs with matroid-rank-sum valuations (which are strictly less general than submodular). The main difference to our work is that standard rounding procedures are often oblivious but typically not convex.

Babioff et al.~\cite{BabaioffKS10,BabaioffKS13} show how to transform a (cycle-)monotone algorithm into a truthful-in-expectation mechanism using a {\em single call} to the algorithm. The resulting mechanism coincides with the algorithm with high probability. This work differs from ours in that it only applies to monotone or cycle-monotone algorithms.

By insisting on truthfulness, or truthfulness-in-expectation, as a solution concept, all these approaches face certain natural barriers to how good they can get (see, e.g., \cite{PapadimitriouSS08,ChawlaIL12}). In addition, they typically do not lead to simple, practical mechanisms. For example, despite running times technically being polynomial, these mechanisms require far more computational effort than standard approximation algorithms for the underlying optimization problem. In some cases, the reduction yields mechanisms in which the approximation guarantee is tight on every single instance (not only in the worst case). That is, even when the optimization problem is trivial, the mechanism sacrifices the solution quality for incentives.

\section{Preliminaries}
\label{sec:prelims}

\paragraph{Algorithm Design Basics.} We consider maximization problems $\Pi$ in which the goal is to determine a feasible outcome $x \in \Omega$ that maximizes total weight given by $w(x)$ for non-negative a weight function $w\colon \Omega \rightarrow \mathbb{R}_{\ge 0}$. A potentially randomized algorithm $A$ receives the functions $w$ as input and computes an output $A(w) \in \Omega$. The algorithm is an $\alpha$-{\em approximation algorithm}, for $\alpha \ge 1$, if for all weights $w$, $\Ex{w(A(w))} \ge \frac{1}{\alpha} \cdot \max_{x \in \Omega} w(x)$.

We are interested in {\em relax-and-round algorithms}. These algorithms first relax the problem $\Pi$ to $\Pi'$ by extending the space of feasible outcomes to $\Omega' \supseteq \Omega$ and generalizing weight functions $w$ to all $x \in \Omega'$. They compute an optimal solution $x' \in \Omega'$ to the relaxed problem. Then a solution $x \in \Omega$ of the original problem is derived based on $x' \in \Omega'$, typically via randomized rounding.

A rounding algorithm is {\em oblivious} if it does not require knowledge of the actual objective function $w$, beyond the fact that $x'$ was optimized with respect to $w$. Formally, a rounding scheme is an $\alpha$-{\em approximate oblivious rounding scheme} if, given some relaxed solution $x'$, it computes a solution $x$ such that for all $w$, $\Ex{w(x)} \ge \frac{1}{\alpha} w(x').$ Clearly, a relax-and-round algorithm based on an $\alpha$-approximate oblivious rounding scheme is an $\alpha$-approximation algorithm.

\paragraph{Mechanism Design Basics.}
Our results apply to general multi-parameter mechanism design problems $\Pi$ in which agents $N = \{1,\dots,n\}$ interact to select an element from a set $\Omega$ of outcomes.
Each agent has a valuation function $v_i\colon \Omega \rightarrow \mathbb{R}_{\ge 0}.$ We use $v$ for the valuation profile that specifies a valuation for each agent, and $v_{-i}$ to denote the valuations of the agents other than $i$. The quality of an outcome $x \in \Omega$ is measured in terms of its social welfare $\sum_{i \in N} v_i(x).$

We consider direct mechanisms $M$ that ask the agents to report their valuations. We refer to the reported valuations as bids and denote them by $b$.
The mechanism uses outcome rule $f$ to compute an outcome $f(b) \in \Omega$ and payment rule $p$ to compute payments $p(b) \in \mathbb{R}_{\ge 0}$. Both the computation of the outcome and the payments can be randomized.
We are specifically interested in {\em pay-your-bid mechanisms}, in which agents are asked to pay what they have bid on the outcome they get. In other words, in a pay-your-bid mechanism $M = (f,p)$, $p_i(b) = b_i(f_i(b)).$ 
We assume that the agents have quasi-linear utilities and that they are risk neutral. That is, we assume that agent $i$'s expected utility in mechanism $M=(f,p)$ is given by $u_i(b,v_i) = \Ex{v_i(f(b))} - \Ex{p_i(b)}.$

For the game-theoretic analysis we distinguish two settings. In the {\em complete information} setting agents know each others' valuations, and a potentially randomized bid profile $b$ that may depend on $v$ is a {\em mixed Nash equilibrium} if for all agents $i \in N$ and possible deviations $b'_i$ that may depend on $v$, $\mathbf{E}_{b}[u_i(b,v_i)] \ge \mathbf{E}_{b'_i,b_{-i}}[u_i((b'_i,b_{-i}),v_i)].$ In the {\em incomplete information} setting valuations are drawn from independent distributions $D_i$, and each agent $i \in N$ knows its own valuation $v_i$ and the distributions $D_{-i}$ from which the other agents valuations are drawn. A {\em mixed Bayes-Nash equilibrium} is a potentially randomized bid profile $b_i$ that may depend on this agent's valuation $v_i$ and the distributions $D_{-i}$ from which the other agents' valuations are drawn such that for all agents $i \in N$ and potential deviations $b'_i$ which are also allowed to depend on $v_i$ and $D_{-i}$, $\mathbf{E}_{b,v_{-i}}[u_i(b,v_i)] \ge \mathbf{E}_{b'_i,b_{-i},v_{-i}}[u_i((b'_i,b_{-i}),v_i)].$

\paragraph{Price of Anarchy.}
We evaluate the quality of mechanisms by their {\em Price of Anarchy}.
The Price of Anarchy with respect to Nash equilibria (PoA) is the worst ratio between the optimal social welfare and the expected welfare in a mixed Nash equilibrium. Similarly, the Price of Anarchy with respect to Bayes-Nash equilibria (BPoA) is the worst ratio between the optimal expected social welfare and the expected welfare in a mixed Bayes-Nash equilibrium. Formally, define $\textsf{NASH}(v)$ and $\textsf{BNASH}(D)$ as the set of all mixed Nash and mixed Bayes Nash equilibria respectively. Then,
\begin{align*}
	PoA &= \max_v \max_{b \in \textsf{NASH}(v)} \frac{\max\limits_{x \in \Omega} \sum_{i \in N} v_i(x)}{\Ex{\sum_{i \in N} v_i(f(b))}}, 
        \ \text{and}\\ 
	BPoA &= \max_D \max_{b \in \textsf{BNASH}(D)} \frac{\max\limits_{x \in \Omega} \Ex{\sum_{i \in N} v_i(x)}}{\Ex{\sum_{i \in N} v_i(f(b))}}.
\end{align*}

\paragraph{The Smoothness Framework.} An important ingredient in our result is the following notion of a smooth mechanism of Syrgkanis and Tardos \cite{SyrgkanisT13}. A mechanism is {\em $(\lambda,\mu)$-smooth} for $\lambda, \mu \ge 0$ if for all valuation profiles $v$ and all bid profiles $b$ there exists a possibly randomized strategy $b'_i$ for every agent $i$ that may depend on the valuation profile $v$ of all agents and the bid $b_i$ of that agent such that
\[
	\sum_{i \in N} \Ex{u_i((b'_i,b_{-i}),v_i)} \ge \lambda \cdot \max_{x \in \Omega} \sum_{i \in N} v_i(x) - \mu \cdot \sum_{i \in N} \Ex{p_i(b)}.
\]

\begin{theorem}[Syrgkanis and Tardos~\cite{SyrgkanisT13}]\label{thm:vasilis-and-eva}
If a mechanism is $(\lambda,\mu)$-smooth and agents have the possibility to withdraw from the mechanism, then the expected social welfare at any mixed Nash or mixed Bayes-Nash equilibrium is at least $\lambda/\max(\mu,1)$ of the optimal social welfare.
\end{theorem}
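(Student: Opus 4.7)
The plan is to combine the $(\lambda,\mu)$-smoothness inequality with the equilibrium condition and then use the withdraw option to close the loop via individual rationality.

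For the mixed Nash case, fix an equilibrium $b$ at valuation profile $v$ and let $x^\star \in \arg\max_{x \in \Omega} \sum_i v_i(x)$. Smoothness yields deviations $b'_i$ (depending on $v$ and $b_i$) satisfying
\[
\sum_{i \in N} \Ex{u_i((b'_i, b_{-i}), v_i)} \;\geq\; \lambda \sum_{i \in N} v_i(x^\star) - \mu \sum_{i \in N} \Ex{p_i(b)}.
\]
Viewing the randomized map $b_i \mapsto b'_i$ as a mixed deviation for agent $i$, the Nash condition gives $\Ex{u_i(b,v_i)} \geq \Ex{u_i((b'_i,b_{-i}),v_i)}$ for each $i$. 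Summing over $i$, expanding $u_i = v_i(f(\cdot)) - p_i(\cdot)$, and writing $W = \sum_i \Ex{v_i(f(b))}$ and $P = \sum_i \Ex{p_i(b)}$ yields $W - P \geq \lambda \cdot \mathrm{OPT} - \mu P$, i.e., $W \geq \lambda \cdot \mathrm{OPT} + (1-\mu) P$. To close the loop, the withdraw option forces $\Ex{u_i(b,v_i)} \geq 0$ and hence $P \leq W$. If $\mu \leq 1$ then nonnegativity of $P$ already gives $W \geq \lambda \cdot \mathrm{OPT}$; if $\mu > 1$ then $1-\mu < 0$, so $(1-\mu) P \geq (1-\mu) W$ (the inequality flips), and substitution yields $\mu W \geq \lambda \cdot \mathrm{OPT}$. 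Either way, $W \geq \lambda/\max(\mu,1) \cdot \mathrm{OPT}$, establishing the Nash claim.

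The main obstacle is the Bayes-Nash case, where the smoothness deviation may depend on the full valuation profile $v$ while Bayesian strategies may depend only on $v_i$ and $D = \prod_i D_i$. The standard remedy is to have agent $i$ draw a ghost sample $w_{-i} \sim D_{-i}$ and deviate to $b'_i((v_i, w_{-i}), b_i(v_i))$, which is a legal Bayesian strategy because it reads only $v_i$ and the prior. Plugging this deviation into the BNE condition, summing over $i$, and relabeling the dummy integration variables $v_{-i} \leftrightarrow w_{-i}$ (valid by the product structure of $D$) reduces the analysis to a per-realization smoothness inequality, after which the Nash-style computation above applies verbatim with $\mathrm{OPT}$ replaced by $\Ex{\mathrm{OPT}(v)}$ and welfare/payment totals replaced by their prior expectations.
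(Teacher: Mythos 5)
Your mixed-Nash argument is correct and is essentially the standard Syrgkanis--Tardos proof (the paper itself only cites this theorem rather than proving it): smoothness plus the equilibrium condition gives $W - P \ge \lambda\,\mathrm{OPT} - \mu P$, nonnegativity of payments handles $\mu \le 1$, and the withdraw option supplies individual rationality, hence $P \le W$, which handles $\mu > 1$; together $W \ge \frac{\lambda}{\max(1,\mu)}\,\mathrm{OPT}$.

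The Bayes--Nash half has a genuine gap in how you handle the dependence of the smoothness deviation on $b_i$. You let agent $i$ deviate to $b'_i\bigl((v_i,w_{-i}),\, b_i(v_i)\bigr)$. After summing the BNE inequalities and relabeling $v_{-i} \leftrightarrow w_{-i}$, agent $i$'s term becomes $u_i\bigl((b'_i(v,\, b_i(v_i)),\, b_{-i}(w_{-i})),\, v_i\bigr)$: the second argument of the deviation is the equilibrium bid at the \emph{true} own type, while the opponents bid at the \emph{ghost} types. There is no single bid profile $\tilde b$ with $\tilde b_i = b_i(v_i)$ and $\tilde b_{-i} = b_{-i}(w_{-i})$ simultaneously for all $i$, so the ``per-realization smoothness inequality'' you invoke cannot actually be applied: smoothness is an aggregate inequality tied to one fixed bid profile, and its payment term $\mu \sum_i p_i(\tilde b)$ must refer to that same profile. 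The standard remedy, and the one used by Syrgkanis and Tardos, is to have agent $i$ sample a \emph{full} ghost profile $w \sim D$ (including a ghost own type $w_i$) and deviate to $b'_i\bigl((v_i,w_{-i}),\, b_i(w_i)\bigr)$. After the swap, the summands are exactly the smoothness deviations for valuation profile $v$ against the consistent bid profile $b(w)$, the payment term is $\mu \sum_i p_i(b(w))$, whose expectation over $w \sim D$ equals the expected equilibrium payments, and your Nash-style calculation then finishes the proof with $\mathbf{E}[\mathrm{OPT}(v)]$. Note that for the mechanisms constructed in this paper the smoothness proofs are semi-smooth (the deviation is independent of $b_i$ and $v_{-i}$), so your version would suffice for those applications; but the theorem as stated allows $b'_i$ to depend on $b_i$, and for that generality your deviation does not work as written.
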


As shown in \cite{SyrgkanisT13}, $(\lambda,\mu)$-smoothness also implies a bound of $\max(\mu,1)/\lambda$ on the Price of Anarchy for correlated equilibria, also known as learning outcomes.  Furthermore, the simultaneous and sequential composition of multiple $(\lambda,\mu)$-smooth mechanisms is again $(\lambda,\mu)$-smooth. For details, on the precise definitions and statements beyond Nash equilibria, see \cite{SyrgkanisT13}.

In fact, our smoothness proofs show an even slightly stronger property, semi-smoothness as defined by \cite{Caragiannis++JET13}: the deviation strategy $b'_i$ only depends on the respective agent's valuation $v_i$, but not on the agent's bid $b_i$ or the other agents'  valuations $v_{-i}$. Therefore, the same Price of Anarchy bounds also apply to coarse correlated equilibria and Bayes-Nash equilibria with correlated types. 

\section{Oblivious Rounding and Smooth Relaxations}
\label{sec:meta-theorems}
In this section, we show our main theorem. We consider mechanisms for a problem $\Pi$ that are constructed as follows. First, one computes an optimal solution $x'$ to a relaxed problem $\Pi'$ that maximizes the declared welfare. That is, it maximizes $\sum_{i \in N} b_i(x')$. Afterwards, an $\alpha$-approximate oblivious rounding scheme is applied to derive a feasible solution $x$ to the original problem $\Pi$. Each bidder is charged $b_i(x)$, i.e., his declared value of this outcome.

\begin{theorem}[Main Result]\label{thm:main}
Consider problem $\Pi$ and a relaxation $\Pi'.$
Given a pay-your-bid mechanism $M' = (f', p')$ that is $(\lambda, \mu)$-smooth where $f'$ is an exact declared welfare maximizer for the relaxation $\Pi'$. Then a pay-your-bid mechanism $M=(f,p)$ for the original problem $\Pi$ that is obtained from the relaxation through an $\alpha$-approximate oblivious rounding scheme is $(\lambda/(2\alpha), \mu)$-smooth.
\end{theorem}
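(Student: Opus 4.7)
My plan is to construct an $M$-deviation $\tilde{b}_i$ from the $(\lambda,\mu)$-smoothness deviation $b'_i$ of $M'$. The natural choice is to scale down the bid, setting $\tilde{b}_i(x) = (1/(2\alpha)) \cdot b'_i(x)$ for every $x$. Intuitively, the $1/\alpha$ factor absorbs the loss from the oblivious rounding, while the additional $1/2$ provides the slack needed to control the pay-your-bid payment after rounding.

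The proof would then proceed in three main steps. First, I would relate the expected revenue of $M$ at the original bid profile $b$ to the revenue of $M'$ at the same profile: applying obliviousness with the non-negative aggregate weight $w = \sum_j b_j$ to the relaxed optimum $f'(b)$ yields $\sum_j b_j(f'(b)) \le \alpha \cdot \Ex{\sum_j b_j(f(b))}$, i.e.\ $\sum_j p_j^{M'}(b) \le \alpha \cdot \sum_j \Ex{p_j^M(b)}$. Second, for each $i$ I would lower-bound the welfare term in $i$'s utility under the deviation $\tilde{b}_i$ via obliviousness with $w = v_i$: $\Ex{v_i(f(\tilde{b}_i, b_{-i}))} \ge (1/\alpha) \cdot v_i(f'(\tilde{b}_i, b_{-i}))$. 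Third, I would invoke the maximality of the declared welfare maximizer $f'$ over $\Omega'$ to compare the relaxed outcome under $(\tilde{b}_i, b_{-i})$ against that under $(b'_i, b_{-i})$ (and against an integral welfare-optimal $x^* \in \Omega \subseteq \Omega'$), which connects the resulting welfare term back to $u_i^{M'}(b'_i, b_{-i})$ as used in the $M'$-smoothness bound. Summing over $i$ and combining these three steps with $(\lambda,\mu)$-smoothness of $M'$ then gives a chain of the form
\[
  \sum_i \Ex{u_i^M((\tilde{b}_i, b_{-i}), v_i)} \;\ge\; \frac{\lambda}{2\alpha}\max_{x \in \Omega} \sum_i v_i(x) - \mu \sum_i \Ex{p_i^M(b)},
\]
which is exactly the $(\lambda/(2\alpha), \mu)$-smoothness of $M$.

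The main obstacle lies in the third step. Obliviousness provides lower bounds of the form $\Ex{w(f(\tilde{b}_i, b_{-i}))} \ge (1/\alpha) \cdot w(f'(\tilde{b}_i, b_{-i}))$ for non-negative $w$, so applying it with $w = b'_i$ gives a lower bound on $\Ex{\tilde{b}_i(f(\tilde{b}_i, b_{-i}))}$, whereas the utility lower bound requires an upper bound on this payment. The scaling by $1/(2\alpha)$ is chosen precisely so that the welfare contribution (cleanly lower-bounded by obliviousness) outweighs the payment term by the necessary margin; the residual imbalance is absorbed into the $\mu$ revenue term by invoking maximality of $f'$ at $(\tilde{b}_i, b_{-i})$ against the relevant comparison outcome in $\Omega'$, and then pushing the resulting $\sum_j b_j$-terms through the revenue comparison established in the first step.
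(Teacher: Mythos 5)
Your steps 1 and 2 are exactly the two uses of obliviousness in the paper's proof of the stronger version of the theorem (Theorem~\ref{thm:smooth}): the revenue comparison $\sum_j p'_j(b) \le \alpha \sum_j \Ex{p_j(b)}$ with weights $w=b$, and the value lower bound with weights $w=v_i$. The genuine gap is at the point you flag yourself, and the proposed fix does not close it. With a deviation $\tilde b_i = \frac{1}{2\alpha}b'_i$ built from an \emph{arbitrary} smoothness deviation $b'_i$ of $M'$, the deviator's payment in the pay-your-bid mechanism $M$, namely $\Ex{\tilde b_i(f(\tilde b_i,b_{-i}))}$, admits no useful upper bound: obliviousness only yields \emph{lower} bounds on expected weights of the rounded outcome, and nothing prevents the rounded outcome from having larger $\tilde b_i$-value than the relaxed optimum. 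Moreover, since $b'_i$ need not bear any pointwise relation to $v_i$, maximality of $f'$ at $(\tilde b_i,b_{-i})$ only produces inequalities involving $\tilde b_i$ and $b_{-i}$; there is no way to convert these into a lower bound on $v_i(f'(\tilde b_i,b_{-i}))$ in terms of the utility $u'_i((b'_i,b_{-i}),v_i)$ that appears in the $(\lambda,\mu)$-smoothness guarantee of $M'$. So the phrase ``the residual imbalance is absorbed into the $\mu$ revenue term by invoking maximality of $f'$'' is not an argument, and the factor $\frac{1}{2\alpha}$ in the scaling does not repair it, because the payment lives on the rounded outcome while all your bounds relate outcomes only in the favorable direction.

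The paper resolves precisely this difficulty by a different decomposition. First, Lemma~\ref{lem:half} shows that, because $f'$ is an \emph{exact} declared-welfare maximizer, the arbitrary smoothness deviation can be replaced by the canonical deviation $b'_i=\frac12 v_i$ at the cost of a factor $2$ in $\lambda$: one pretends the valuations are $\frac12 v$, takes the corresponding smoothness deviations $b''_i$, and compares utilities through the VCG mechanism implemented on the same allocation rule $f'$, using truthfulness of VCG, nonnegativity of VCG payments, the fact that VCG payments never exceed the bid (hence are at most the pay-your-bid payments), and the scale invariance $OPT(v/2)=\frac12 OPT(v)$. Once the deviation is pointwise tied to the valuation, the payment problem disappears identically: in a pay-your-bid mechanism, bidding $\frac12 v_i$ makes the deviator's utility equal to $\frac12 \Ex{v_i(f(\frac12 v_i,b_{-i}))}$ for every realization of the rounding, so only value terms remain, and your steps 1 and 2 then finish the proof exactly as in Theorem~\ref{thm:smooth}. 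Your outline is missing this reduction to the half-value deviation (or any substitute that relates the deviation pointwise to $v_i$), and without it the chain you describe does not go through.
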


In many applications, smoothness is shown by the deviation strategy of reporting half one's true value. First we show that, while generally the deviation strategy $b_i'$ can be arbitrary, it is sufficient to consider only this deviation $b_i' = \frac{1}{2} v_i$. We exploit the fact that $f'$ performs exact optimization.

\begin{lemma}\label{lem:half}
Given a pay-your-bid mechanism $M = (f, p)$ that is $(\lambda, \mu)$-smooth where $f$ is an exact declared welfare maximizer. Then $M$ is $(\lambda/2,\mu)$-smooth for deviations to half the value. That is, for all bid vectors $b$ and bids $b'_i = \frac{1}{2}v_i$ for all $i \in N$,
$
\sum_{i \in N} u_i((b_i', b_{-i}), v_i) \geq \frac{\lambda}{2} OPT(v) - \mu \sum_{i \in N} p_i(b)
$.
\end{lemma}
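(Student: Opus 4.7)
The plan is to use exact welfare maximization as a bridge from an arbitrary smoothness witness to the specific half-value deviation $b'_i = \tfrac{1}{2}v_i$. By $(\lambda,\mu)$-smoothness we fix, for each bid profile $b$, a (possibly randomized) witness $\tilde b_i$ per agent with $\sum_i \Ex{v_i(f(\tilde b_i, b_{-i})) - \tilde b_i(f(\tilde b_i, b_{-i}))} \ge \lambda\cdot OPT(v) - \mu\cdot\sum_i \Ex{p_i(b)}$. In pay-your-bid, the half-value deviation has utility $u_i((b'_i, b_{-i}), v_i) = \tfrac{1}{2}v_i(f(b'_i, b_{-i}))$, so the goal is to lower-bound the latter in a way that couples it back to the witness's guarantee.

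The core step is to compare the outcomes $f(b'_i, b_{-i})$ and $f(\tilde b_i, b_{-i})$ using the exact optimality of $f$ at both profiles. Applying each profile's optimality condition to the other profile's outcome and adding the two inequalities causes the $b_{-i}$ contributions to cancel on both sides. Dropping the nonnegative term $\tilde b_i(f(b'_i, b_{-i}))$ yields the per-agent estimate
\[
u_i((b'_i, b_{-i}), v_i) \;\ge\; \tfrac{1}{2}v_i(f(\tilde b_i, b_{-i})) - \tilde b_i(f(\tilde b_i, b_{-i})).
\]
Summing over $i$ and rewriting the right-hand side as $\tfrac{1}{2}\sum_i u_i((\tilde b_i, b_{-i}), v_i) - \tfrac{1}{2}\sum_i \tilde b_i(f(\tilde b_i, b_{-i}))$, I would bound the first term directly using the assumed smoothness of $M$, which gives $\tfrac{\lambda}{2}\, OPT(v) - \tfrac{\mu}{2}\sum_i p_i(b)$.

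The main obstacle will be controlling the residual half-payment term $\tfrac{1}{2}\sum_i \tilde b_i(f(\tilde b_i, b_{-i}))$, which must be absorbed into $\mu\cdot\sum_i p_i(b)$ in order to obtain the claimed coefficient $\mu$ on the payment rather than $\mu/2$ plus a correction. The plan for closing this gap is a second application of exact optimization, now at the original profile $b$ itself, exploiting that $\sum_j b_j(f(b)) \ge \sum_j b_j(y)$ for every $y \in \Omega$ and taking $y = f(\tilde b_i, b_{-i})$, combined with the standard restriction to non-overbidding witnesses that is implicit in the "possibility to withdraw" assumption underlying the smoothness framework. This step is where the exact-optimization hypothesis on $f$ does the essential work beyond the one-shot comparison of outcomes.
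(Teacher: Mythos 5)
Your per-agent exchange step is sound: adding the two exact-optimality inequalities at the profiles $(b'_i,b_{-i})$ and $(\tilde b_i,b_{-i})$, cancelling the $b_{-i}$ terms and dropping $\tilde b_i(f(b'_i,b_{-i}))\ge 0$ does give
\[
u_i((b'_i,b_{-i}),v_i)=b'_i(f(b'_i,b_{-i}))\;\ge\;\tfrac12 v_i(f(\tilde b_i,b_{-i}))-\tilde b_i(f(\tilde b_i,b_{-i})),
\]
and this is in essence the same comparison the paper carries out (there phrased via running $f$ with VCG payments and invoking truthfulness plus the fact that VCG payments are nonnegative and never exceed bids). The genuine gap is in how you invoke smoothness and in the proposed absorption of the residual term. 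Because you apply $(\lambda,\mu)$-smoothness at the \emph{true} valuations $v$, the right-hand side of your per-agent estimate is not the smoothness left-hand side: summing gives $\tfrac12\sum_i\bigl(v_i(f(\tilde b_i,b_{-i}))-\tilde b_i(f(\tilde b_i,b_{-i}))\bigr)-\tfrac12\sum_i\tilde b_i(f(\tilde b_i,b_{-i}))$, and you must show $\tfrac12\sum_i\tilde b_i(f(\tilde b_i,b_{-i}))\le\tfrac{\mu}{2}\sum_i p_i(b)$. Neither of your two tools can deliver this. Exact optimality at $b$ only compares $\sum_j b_j(f(b))$ with $\sum_j b_j(y)$, i.e.\ it produces bounds in terms of the \emph{original} bids $b_i$, never in terms of the witness bids $\tilde b_i$, which are unrelated to $b_i$. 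And the smoothness definition used in the paper places no non-overbidding restriction on the witnesses; the ``possibility to withdraw'' assumption constrains equilibrium play in Theorem~1, not the deviations $\tilde b_i$, and even a non-overbidding witness only yields $\tilde b_i(f(\tilde b_i,b_{-i}))\le v_i(f(\tilde b_i,b_{-i}))$, a value term at a different profile that can be of the order of $OPT(v)$ while $\sum_i p_i(b)$ can be zero (e.g.\ a first-price auction with $b\equiv 0$), so the required inequality is simply false in general.

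The missing idea, which is exactly the paper's trick, is to apply the smoothness hypothesis to the \emph{halved} valuation profile $\tfrac12 v$ while keeping the bid profile $b$: this yields witnesses $b''_i$ with $\sum_i u_i((b''_i,b_{-i}),\tfrac12 v_i)\ge\lambda\,OPT(\tfrac{v}{2})-\mu\sum_i p_i(b)=\tfrac{\lambda}{2}OPT(v)-\mu\sum_i p_i(b)$, where the payment term keeps its full coefficient $\mu$ because the payments at $b$ do not depend on the valuations. Your exchange estimate (with $\tilde b_i=b''_i$) then says precisely $u_i((b'_i,b_{-i}),v_i)\ge\tfrac12 v_i(f(b''_i,b_{-i}))-b''_i(f(b''_i,b_{-i}))=u_i((b''_i,b_{-i}),\tfrac12 v_i)$, so summing leaves no residual and the claim follows. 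With that single change your argument becomes a correct, and essentially equivalent, rendering of the paper's proof.
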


\begin{proof}
We first use $(\lambda, \mu)$-smoothness of $M$. For any valuations, there have to be deviation bids fulfilling the respective conditions. So, in particular, let us pretend that each bidder $i$ has valuation $\frac{1}{2} v_i$. By smoothness, there are bids $b_i''$ against $b$ such that
\begin{equation}
\sum_{i \in N} u_i\left((b_i'', b_{-i}), \frac{1}{2} v_i\right)
\geq \lambda OPT\left( \frac{v}{2} \right) - \mu \sum_{i \in N} p_i(b). \label{eq:lemma1:smoothness}
\end{equation}

The next step is to relate the sum of utilities that agents with valuations $v$ get in $M$ when they unilaterally deviate from $b$ to $b'_i$, i.e., $\sum_{i \in N} u_i((b'_i,b_{-i}),v_i) = \sum_{i \in N} \frac{1}{2}v_i(f_i(b'_i,b_{-i})) = \sum_{i \in N} b'_i(f_i(b'_i,b_{-i}))$, to the sum of utilities that they get in $M$ with valuations $\frac{1}{2}v$ and unilateral deviations from $b$ to $b''_i$, i.e., $\sum_{i \in N} u_i((b''_i,b_{-i}),\frac{1}{2}v_i)$.

The allocation function $f$ optimizes exactly over its outcome space. Therefore, it can be used to implement a truthful mechanism $M^{\text{VCG}} = (f, p^{\text{VCG}})$ by applying VCG payments. As VCG payments are non-negative, we get
\[
u_i((b'_i,b_{-i}),v_i) = \frac{1}{2} v_i(f(b_i', b_{-i})) = b_i'(f(b_i', b_{-i})) \geq b_i'(f(b_i', b_{-i})) - p^{\text{VCG}}(b_i', b_{-i}).
\]
Observe that the latter term is exactly the utility bidder $i$ receives in $M^{\text{VCG}}$ if his valuation and bid is $b_i'$. As $M^{\text{VCG}}$ is truthful, this term is maximized by reporting the true valuation. In other words, it can only decrease, if bidder $i$ changes his bid to $b_i''$ (keeping the valuation $b_i'$). That is,
\[
u_i((b'_i,b_{-i}),v_i) \geq b_i'(f(b_i', b_{-i})) - p^{\text{VCG}}(b_i', b_{-i}) \geq b_i'(f(b_i'', b_{-i})) - p_i^{\text{VCG}}(b_i'', b_{-i}).
\]
Finally, we use that $p^{\text{VCG}}$ is no larger than $p$ because VCG payments never exceed bids, i.e., $p_i^{\text{VCG}}(b_i'', b_{-i}) \leq b_i''(f(b_i'', b_{-i})) = p_i(b_i'', b_{-i})$. By furthermore changing $b_i'$ back to $\frac{1}{2} v_i$, we get
\[
u_i((b'_i,b_{-i}),v_i) \geq \frac{1}{2} v_i(f(b_i'', b_{-i})) - p_i(b_i'', b_{-i})
= u_i\left((b''_i,b_{-i}), \frac{1}{2}v_{i}\right).
\]

Summing this inequality over all $i \in N$ and combining it with inequality
\eqref{eq:lemma1:smoothness}, we get
\[
\sum_{i \in N} u_i((b'_i,b_{-i}),v_i)
\geq \lambda OPT\left( \frac{v}{2} \right) - \mu \sum_{i \in N} p_i(b) = \frac{\lambda}{2} OPT(v)  - \mu \sum_{i \in N} p_i(b). \qedhere
\] 
\end{proof}

It remains to show that smoothness of the relaxation for deviations to half the value, implies smoothness of the derived mechanism for the original problem.
As it is often possible to directly show smoothness for deviations to half the value, we state the following stronger version of Theorem~\ref{thm:main} for relaxations that are $(\lambda,\mu)$-smooth for deviations to half the value.

Theorem \ref{thm:main} follows by first using Lemma~\ref{lem:half} to argue that unconstrained $(\lambda,\mu)$-smoothness of the relaxation implies $(\lambda/2,\mu)$-smoothness for deviations to half the value and then using Theorem~\ref{thm:smooth} to show that the derived mechanism is $(\lambda/(2\alpha),\mu)$-smooth.

\begin{thmbis}{thm:main}[Stronger Version of Main Theorem]\label{thm:smooth}
If the pay-your-bid mechanism $M' = (f',p')$ that solves the relaxation $\Pi'$ optimally is $(\lambda,\mu)$-smooth for deviations to $b'_i = \frac{1}{2} v_i$, then the pay-your-bid mechanism $M = (f,p)$ for $\Pi$ that is obtained from the relaxation through an $\alpha$-approximate oblivious rounding scheme is $(\lambda/\alpha,\mu)$-smooth.
\end{thmbis}
\begin{proof}
For any bid vector $b$, denote the utility of agent $i \in N$ under mechanism $M = (f,p)$ by $u_i(b, v) = v_i(f_i(b)) - p_i(b)$ and under mechanism $M' = (f',p')$ by $u'_i(b, v) = v_i(f'_i(b)) - p'_i(b).$

For each bidder $i$, we consider the unilateral deviation by $b_i' = \frac{1}{2} v_i$. As $M$ is a pay-your-bid mechanism, bidder $i$'s utility when bidding $b_i'$ against $b_{-i}$ can be expressed by 
\[
\Ex{u_i((b_i', b_{-i}), v_i)} = \Ex{v_i(f(b_i', b_i)) - p_i(b_i', b_{-i})} = \frac{1}{2} \Ex{v_i(f(b_i', b_{-i}))}.
\]
Next we use that the outcome $f(b_i', b_{-i})$ is derived from $f'(b_i', b_{-i})$ by applying an $\alpha$-approximate oblivious rounding scheme by considering the weight function in which $w_i = v_i$ for all $i$ and concluding that $\Ex{v_i(f(b_i', b_{-i}))} \geq \frac{1}{\alpha} v_i(f'(b_i', b_{-i}))$. That is, for bidder $i$'s utility, we get 
\[
\Ex{u_i((b_i', b_{-i}), v_i)} \geq \frac{1}{2 \alpha} v_i(f'(b_i', b_{-i})) = \frac{1}{\alpha} u_i'((b_i', b_{-i}), v_i),
\]
where the last step uses that $M'$ is a pay-your-bid mechanism as well.

Next, we apply the fact that $M'$ is $(\lambda,\mu)$-smooth for deviations to $b'_i = \frac{1}{2}v_i.$ We get for the sum of utilities in $M$ that
\[
\sum_{i \in N} \Ex{u_i((b_i', b_{-i}), v_i)} \geq \frac{1}{\alpha} \sum_{i \in N} u_i'((b_i', b_{-i}), v_i) \geq \frac{1}{\alpha} \left( \lambda OPT(v) - \mu \sum_{i \in N} p'_i(b) \right).
\]

To bound the terms $p'_i(b)$, we use once more the fact that we are applying an $\alpha$-approximate oblivious rounding scheme, this time to derive $f(b)$ from $f'(b)$ and considering the weight function in which $w_i = b_i$ for all $i$. This implies
\[
p'_i(b) = b_i(f'_i(b)) \leq \alpha \Ex{b_i(f_i(b))} = \alpha \Ex{p_i(b)}.
\]

Overall, we get
\[
\sum_{i \in N} \Ex{u_i((b_i', b_{-i}), v_i)} \geq \frac{1}{\alpha} \lambda OPT(v) - \mu \sum_{i \in N} p_i(b),
\]
as claimed.
\end{proof}

\section{Sparse Packing Integer Programs}
\label{section:multiunit}
In a sparse packing integer program (PIP) each bidder $i$ can be served in $K$ possible ways. The fact whether bidder $i$ gets option $k$ is represented by a binary variable $x_{i, k} \in \{0,1\}$. Each bidder $i$ can only get one option, that is $\sum_{k \in [K]} x_{i, k} \leq 1$ for each $i$. Furthermore, matrix $A$ and vector $c$ represent packing constraints between the bidders, requiring that $A x \leq c$. Each bidder's valuation depends on the option that he is served by. That is $v_i$ can be expressed as $v_i(x) = \sum_{k \in [K]} v_{i, k} x_{i, k}$. The goal is to find $\max \sum_{i \in N} v_i(x)$ subject to feasibility.

We consider the relaxation of this integer program in which the binary variables $x_{i, k} \in \{0,1\}$ are replaced with non-negative variables $x_{i, k} \ge 0$. The interpretation is that $x_{i, k}$ is a fractional allocation of option $k$ to bidder $i$, and no bidder $i$ can be assigned more than the fractional equivalent of one option. This relaxation is a LP and can therefore be solved in polynomial time.

The \emph{column sparsity} $d$ is the maximum number of non-zero entries in a single column of $A$. Formally, for each variable $x_{j, k}$, let $S_{j, k}$ be the set of constraints in $A$ with a non-zero coefficient, that is, $S_{j, k} = \{ \ell \mid A_{\ell, j, k} \neq 0 \}$. Now $d = \max_{j, k} \lvert S_{j, k} \rvert$. Examples with $d = 1$ are multi unit-auctions with unconstrained valuations or unit demand auctions, where each player wants at most one item, possibly with player dependent capacity constraints, like makespan constraints in a generalized assignment problem; or more generally, combinatorial auctions in which each bidder is interested in bundles of at most $d$ items are an example with $d \ge 1$.

\begin{theorem}\label{thm:pip-poa}
There is an oblivious rounding based, pay-your-bid mechanism for $d$-sparse packing integer programs that achieves a Price of Anarchy of $32$ for $d =1$ and of $16d(d+1)$ for general $d$.
\end{theorem}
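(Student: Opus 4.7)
The proof proceeds by applying \thmref{thm:smooth} to combine smoothness of the LP relaxation with an oblivious rounding scheme. For the rounding, I would invoke the procedure of Bansal, Korula, Nagarajan, and Srinivasan~\cite{BansalKNS10}, which rounds a fractional solution of any column-$d$-sparse packing LP so that each variable is set to $1$ with marginal probability at least $x^*_{j,k}/(8d)$; the procedure depends only on the fractional solution, so it is oblivious in the sense of \secref{sec:prelims}, and $8d$-approximate in general ($8$-approximate in the case $d=1$). For the relaxation I would show that the pay-your-bid mechanism solving the LP exactly achieves a smooth Price of Anarchy of $2(d+1)$ for deviations to half the value. Applying \thmref{thm:smooth} then yields a PoA of $8d \cdot 2(d+1) = 16d(d+1)$, which specializes to $32$ for $d=1$.

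For the smoothness, I would use a standard test-allocation argument. Fix valuations $v$ and bid profile $b$, let $x^*$ denote an integral welfare maximizer for $v$, let $k^*_i$ denote the option assigned to bidder $i$ in $x^*$, and for each $i$ let $x^{(i)} = f'(v_i/2, b_{-i})$ denote the LP allocation when bidder $i$ deviates to $b'_i = v_i/2$. I would define the fractional test allocation $\tilde x^{(i)}$ that gives bidder $i$ a $1/(d+1)$-fraction of option $k^*_i$ and scales each other bidder $j$'s allocation under $x = f'(b)$ by $d/(d+1)$. Feasibility follows directly from column sparsity: for every packing constraint $\ell$, bidder $i$'s contribution is at most $A_{\ell,i,k^*_i}/(d+1) \le c_\ell/(d+1)$ (using feasibility of $x^*$), while the rescaled contributions from the other bidders sum to at most $\tfrac{d}{d+1} c_\ell$ (using feasibility of $x$). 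Invoking LP optimality of $x^{(i)}$ against $\tilde x^{(i)}$ under bids $(v_i/2, b_{-i})$ yields a per-bidder inequality that, after summing over $i$ and rearranging, gives the desired smoothness bound.

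The main technical obstacle is handling the cross term $\sum_{j \neq i} b_j(x^{(i)}_j)$ that appears naturally on the left-hand side of the per-bidder LP inequality: the smoothness definition permits only the actual payments $\sum_j p_j(b) = \sum_j b_j(x_j)$ on the right-hand side. Bridging this gap requires an LP-sensitivity argument exploiting column sparsity to bound the aggregate cross terms by an $O(d)$-multiple of the total payments; intuitively, when bidder $i$ reduces his bid, the reallocation to other bidders is localized to the at most $d$ packing constraints containing $(i, k^*_i)$, which limits how much others' declared welfare can grow relative to the original allocation $x$. Once this sensitivity bound is in place, the derivation of $(\lambda,\mu)$-smoothness with $\mu/\lambda = 2(d+1)$ goes through, and \thmref{thm:smooth} yields the claimed Price of Anarchy.
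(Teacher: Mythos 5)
Your high-level route is exactly the paper's: the $8d$-approximate oblivious rounding of Bansal et al., $(1/2,d+1)$-smoothness of the LP relaxation for deviations to $b'_i=\tfrac12 v_i$, and \thmref{thm:smooth} giving $16d(d+1)$ (and $32$ for $d=1$). However, the one piece that carries all the technical weight---the smoothness of the relaxation---is not actually established, and the specific test allocation you propose does not work. With $\tilde x^{(i)}$ giving $i$ a $\tfrac{1}{d+1}$-fraction of $k^*_i$ and scaling every other bidder's allocation under $x=f'(b)$ uniformly by $\tfrac{d}{d+1}$, LP optimality under $(v_i/2,b_{-i})$ yields $u_i((b'_i,b_{-i}),v_i)\ge \tfrac{1}{2(d+1)}v_i(x^*_i)+\tfrac{d}{d+1}\sum_{j\ne i}b_j(x_j)-\sum_{j\ne i}b_j(x^{(i)}_j)$. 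Contrary to what you single out as the obstacle, the cross term itself is benign: $\sum_{j\ne i}b_j(x^{(i)}_j)\le W^{b_{-i}}(c)\le W^{b}(c)=\sum_j b_j(x_j)$, so its excess over $\sum_{j\ne i}b_j(x_j)$ is at most $p_i(b)$ and sums to the total revenue. The real problem is the uniform $\tfrac{1}{d+1}$ haircut you impose on the other bidders: it leaves a per-deviation shortfall of $\tfrac{1}{d+1}\sum_{j\ne i}b_j(x_j)$, and summing over the $n$ deviations produces a term of order $\tfrac{n}{d+1}\sum_j p_j(b)$, i.e.\ $\mu=\Theta(n)$ rather than $d+1$. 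The appeal to an ``LP-sensitivity/localization'' argument does not rescue this: the deviation $v_i/2$ need not be a bid reduction, LP re-optimization is not localized to the constraints containing $(i,k^*_i)$, and in any case the term that blows up is not the growth of the others' welfare after the deviation but the fixed fraction of their welfare you surrendered in the test allocation.

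The missing idea is the paper's Lemma~\ref{lem:lemma-4.7-lps}: instead of a uniform scaling, shrink each variable $\hat x_{j,k}$ of the current LP optimum by the \emph{capacity-proportional} factor $1-\delta^i_{j,k}$ with $\delta^i_{j,k}=\max_{\ell\in S_{j,k}}(A(\bar x_i,0))_\ell/c_\ell$, which is feasible for the residual capacities $c-A(\bar x_i,0)$. Column sparsity together with feasibility of $\bar x$ gives $\sum_i\delta^i_{j,k}\le d$, so the shrinkage \emph{aggregated over all $n$ deviations} costs only a $d$-fraction (plus one) of $W^b(c)$, yielding $\sum_{i}\bigl(W^{b_{-i}}(c)-W^{b_{-i}}(c-A(\bar x_i,0))\bigr)\le (d+1)W^b(c)$; plugging this into the per-bidder optimality inequality (written, as in the paper's Lemma~\ref{lem:pip-smooth}, in terms of $W^{b_{-i}}(c)$ and $W^{b_{-i}}(c-A(\bar x_i(v),0))$ rather than your fixed test allocation) gives the $(1/2,d+1)$-smoothness. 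So your proposal is right in architecture and arithmetic but has a genuine gap exactly where the paper's key lemma does the work, and the patch you sketch would not close it.
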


To prove Theorem~\ref{thm:pip-poa} we use the fact that an $8d$-approximate oblivious rounding scheme is available through \cite{BansalKNS10}. In addition, we show in Lemma~\ref{lem:pip-smooth} that the canonical LP relaxation of a $d$-sparse packing integer program that this rounding scheme is based on is $(1/2,d+1)$-smooth for deviations to $b'_i = \frac{1}{2}v_i$. The claimed bound on the Price of Anarchy then follows from Theorem~\ref{thm:smooth}.

\begin{lemma}\label{lem:pip-smooth}
The pay-your-bid mechanism that solves the canonical LP relaxation of a $d$-sparse packing integer program is $(1/2,d+1)$-smooth for deviations to $b'_i = \frac{1}{2}v_i.$
\end{lemma}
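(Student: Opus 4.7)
My plan is to prove the smoothness inequality directly for the deviation $b_i' = \frac{1}{2} v_i$. Fix a bid profile $b$ and a welfare-maximizing integer allocation $x^*$ for the true valuations $v$; write $x = f'(b)$ and $y^i = f'(b_i', b_{-i})$ for the LP-optimal allocations before and after bidder $i$ deviates. In a pay-your-bid mechanism, $u_i((b_i', b_{-i}), v_i) = v_i(y^i) - b_i'(y^i) = \frac{1}{2} v_i(y^i)$, so the inequality to establish is
\[
\sum_{i \in N} \frac{1}{2} v_i(y^i) + (d+1) \sum_{i \in N} b_i(x) \geq \frac{1}{2} \sum_{i \in N} v_i(x^*).
\]

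The heart of the argument is to construct, for each bidder $i$, a feasible LP point $z^i$ that assigns bidder $i$ the OPT slice $x^*_i$ while retaining as much of the other bidders' allocations in $x$ as possible. I would set $z^i_i = x^*_i$ and, for each $j \neq i$ and option $k$, $z^i_{j,k} = \gamma_{i,j,k} x_{j,k}$ with the per-variable scaling factor $\gamma_{i,j,k} = \min_{\ell \in S_{j,k}} \bigl(1 - A_\ell x^*_i / c_\ell\bigr)$, which lies in $[0,1]$ since $x^*$ is feasible. Checking packing constraint $\ell$: each variable contributing to $\ell$ is scaled by at most $1 - A_\ell x^*_i / c_\ell$, so the total non-$i$ contribution to $\ell$ is at most $c_\ell - A_\ell x^*_i$, which combined with $A_\ell x^*_i$ stays within $c_\ell$; capacity constraints $\sum_k z^i_{j,k} \leq 1$ are immediate. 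Applying LP optimality of $y^i$ against $z^i$ then gives
\[
\frac{1}{2} v_i(y^i) + \sum_{j \neq i} b_j(y^i) \geq \frac{1}{2} v_i(x^*_i) + \sum_{j \neq i} b_j(x) - \Delta_i,
\]
where $\Delta_i = \sum_{j \neq i, k} b_{j,k}(1 - \gamma_{i,j,k}) x_{j,k}$ is the bid mass removed by the scaling.

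Summing this over $i$ leaves two cross terms to control. For $\sum_i \sum_{j \neq i} b_j(y^i)$, LP optimality of $x$ at bids $b$ applied to the feasible point $y^i$, together with nonnegativity of bids, gives $\sum_{j \neq i} b_j(y^i) \leq \sum_j b_j(x) - b_i(y^i) \leq \sum_j b_j(x)$, so the double sum is at most $n \sum_j b_j(x)$. For $\sum_i \Delta_i$, which is where the $d$-sparsity enters, I would use $1 - \gamma_{i,j,k} = \max_{\ell \in S_{j,k}} A_\ell x^*_i / c_\ell \leq \sum_{\ell \in S_{j,k}} A_\ell x^*_i / c_\ell$ and swap the order of summation; since $\sum_i A_\ell x^*_i = A_\ell x^* \leq c_\ell$, each $b_{j,k} x_{j,k}$ is charged at most $|S_{j,k}| \leq d$ times, giving $\sum_i \Delta_i \leq d \sum_j b_j(x)$. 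Putting it together, the coefficient of $\sum_j b_j(x)$ becomes $(n-1) - n - d = -(d+1)$, yielding exactly the target inequality.

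The main obstacle is the choice of $z^i$. Simpler alternatives fail: $z^i_j = x_j$ unchanged violates feasibility by up to a factor of $2$ per constraint, while $z^i_j = 0$ loses an unrecoverable factor of $n$ in the cross-utility term. The per-constraint scaling $\gamma_{i,j,k} = \min_\ell(1 - A_\ell x^*_i / c_\ell)$ is tuned precisely so that the elementary bound $\max \leq \sum$ over the at most $d$ constraints in $S_{j,k}$ converts column sparsity into the desired linear-in-$d$ loss, while simultaneously preserving feasibility in every packing constraint.
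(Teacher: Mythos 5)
Your proof is correct and is essentially the paper's argument in inlined form: your scaling factors $\gamma_{i,j,k}=\min_{\ell\in S_{j,k}}\bigl(1-(A(x^*_i,0))_\ell/c_\ell\bigr)$ are exactly the $1-\delta^i_{j,k}$ of Lemma~\ref{lem:lemma-4.7-lps}, and your bound $\sum_i\Delta_i\le d\sum_j b_j(x)$ is that lemma's column-sparsity charging argument, merged with the exchange step in the paper's proof of Lemma~\ref{lem:pip-smooth} into a single direct comparison against the explicit feasible point $z^i$. The only cosmetic difference is that you benchmark against the integral optimum $x^*$ whereas the paper uses the fractional optimum $\bar{x}(v)$; since your argument uses only feasibility of $x^*$, it applies verbatim to the fractional benchmark as well.
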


To prove this lemma we show the following auxiliary lemma. Given a bid vector $b$ and a capacity vector $c$, we denote by $W^b(c)$ the value of the optimal LP solution.

\begin{lemma}\label{lem:lemma-4.7-lps}
Let $\bar{x}$ be an arbitrary fractional solution. Then,
\begin{equation*}
	\sum_{i \in N} \left(W^{b_{-i}}(c) - W^{b_{-i}}(c - A (\bar{x}_i, 0) ) \right)\leq (d + 1) \cdot W^{b}(c).
\end{equation*}
\end{lemma}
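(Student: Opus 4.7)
The plan is to combine LP duality on $W^b(c)$ with a primal rescaling argument, and then to sum using the column sparsity of $A$ together with the feasibility of $\bar{x}$ (i.e., $A\bar{x} \leq c$). Let $(y^{\ast}, z^{\ast})$ be an optimal dual for the LP defining $W^b(c)$, so by strong duality $W^b(c) = c\cdot y^{\ast} + \sum_i z^{\ast}_i$, with the dual constraints $\sum_\ell A_{\ell,i,k}\, y^{\ast}_\ell + z^{\ast}_i \geq b_{i,k}$ holding for every $(i,k)$. The unshifted term is handled immediately: since $(y^{\ast}, z^{\ast}_{-i})$ remains dual-feasible for the LP with bidder $i$ removed, we obtain $W^{b_{-i}}(c) \leq c \cdot y^{\ast} + \sum_{j \neq i} z^{\ast}_j = W^b(c) - z^{\ast}_i$.

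The crux is to lower bound the shifted term $W^{b_{-i}}(c - A_i \bar{x}_i)$. I would do this by exhibiting an explicit primal-feasible solution: start from $x^{\ast}_{-i}$, where $x^{\ast}$ is a primal optimum of $W^b(c)$, and scale its coordinates down so that on each resource $\ell$ we release at least the $(A_i \bar{x}_i)_\ell$ units of capacity required. Column sparsity enters here: each coordinate $x^{\ast}_{j,k}$ with $j \neq i$ participates in at most $d$ of the packing constraints, so its scale-down simultaneously serves at most $d$ different release obligations. By complementary slackness, $b_{j,k} = \sum_{\ell \in S_{j,k}} A_{\ell,j,k}\, y^{\ast}_\ell + z^{\ast}_j$ for $(j,k)$ in the support of $x^{\ast}$, so the per-unit value lost when rescaling is controlled by the dual prices of the at-most-$d$ constraints containing $(j,k)$ plus the bidder-specific term $z^{\ast}_j$. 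Accumulating, I would get a bound of the form $W^{b_{-i}}(c - A_i \bar{x}_i) \geq W^b(c) - z^{\ast}_i - d \sum_\ell y^{\ast}_\ell (A_i \bar{x}_i)_\ell - (\text{residual }z^{\ast}\text{ terms})$.

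Subtracting and summing over $i$, the $z^{\ast}_i$ contributions from the unshifted side cancel with their matched terms on the shifted side, and the resource-price contribution collapses via $\sum_i (A_i \bar{x}_i)_\ell = (A\bar{x})_\ell \leq c_\ell$ (this is where feasibility of $\bar{x}$ is used), giving at most $d \cdot c \cdot y^{\ast} \leq d \cdot W^b(c)$; the residual bidder-specific bookkeeping contributes a further $\sum_i z^{\ast}_i \leq W^b(c)$, for a grand total of $(d+1) W^b(c)$ as claimed. The main obstacle is the rescaling step: the scaling factors on the coordinates of $x^{\ast}_{-i}$ must be chosen to restore primal feasibility on \emph{every} resource simultaneously without causing cascading infeasibilities or overcounting losses. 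The factor $d$ is inherent because a single variable may be forced to shrink on account of up to $d$ distinct resource obligations, and pinning down the exact constant $d+1$ (rather than something weaker like $d \cdot K$) requires tying each variable's scale-down to the dual prices of all its host constraints via complementary slackness, so that the summed losses telescope cleanly against $c \cdot y^{\ast}$.
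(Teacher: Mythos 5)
Your high-level strategy is the right one: build a feasible solution for each reduced-capacity LP by scaling down the welfare-maximizing solution $\hat{x}$ for $W^b(c)$, and use column sparsity together with feasibility of $\bar{x}$ (i.e.\ $A\bar{x}\le c$) so that the total value lost over all $i$ is at most $d\cdot W^b(c)$, with one more $W^b(c)$ coming from the unshifted terms. But there is a genuine gap exactly at the step you yourself flag as the ``main obstacle'': you never specify the scaling factors, never verify that the rescaled point is feasible for the capacities $c-A(\bar{x}_i,0)$, and never establish the per-$i$ inequality $W^{b_{-i}}(c-A(\bar{x}_i,0))\ge W^b(c)-z^*_i-d\sum_\ell y^*_\ell (A(\bar{x}_i,0))_\ell-(\text{residual})$ on which your final accounting rests. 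As written that intermediate bound is not justified: with any natural scaling, a variable $(j,k)$ must shrink by roughly its worst-case relative load $\max_{\ell\in S_{j,k}}(A(\bar{x}_i,0))_\ell/c_\ell$, and for a fixed $i$ the resulting loss is not controlled by $d\,\sum_\ell y^*_\ell (A(\bar{x}_i,0))_\ell$; the factor $d$ only becomes available after summing over $i$, via $\sum_i \max_{\ell\in S_{j,k}}(A(\bar{x}_i,0))_\ell/c_\ell\le\sum_{\ell\in S_{j,k}}(A\bar{x})_\ell/c_\ell\le \lvert S_{j,k}\rvert\le d$. So the argument, as proposed, does not go through without the missing construction.

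The paper's proof supplies precisely this missing piece, and does so without any duality: it sets $\hat{x}^{-i}_{j,k}=(1-\delta^i_{j,k})\hat{x}_{j,k}$ with $\delta^i_{j,k}=\max_{\ell\in S_{j,k}}(A(\bar{x}_i,0))_\ell/c_\ell$, for which feasibility with respect to $c-A(\bar{x}_i,0)$ is immediate, and then uses $\sum_i\delta^i_{j,k}\le d$ to get $\sum_i W^{b_{-i}}(c-A(\bar{x}_i,0))\ge(n-d-1)\,W^b(c)$, while the unshifted terms are handled by the trivial $W^{b_{-i}}(c)\le W^b(c)$; neither $(y^*,z^*)$ nor complementary slackness is needed, and the ``$+1$'' does not come from the $z^*_i$'s. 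Your duality scaffolding can in fact be completed if you adopt this same explicit scaling and only sum at the end (the $d$ then multiplies all of $W^b(c)=c\cdot y^*+\sum_j z^*_j$, not just $c\cdot y^*$, which still totals at most $(d+1)W^b(c)$), but in the form you stated, the key rescaling-and-charging step is absent and the per-$i$ bound it presupposes is unsubstantiated.
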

\begin{proof}
Let $\hat{x}$ denote a fractional allocation that maximizes declared welfare for players $N$ with bids $b$ (we have $0 \leq \hat{x}_{i, k}$, $\sum_{k} \hat{x}_{i, k} \leq 1$, $A \hat{x} \leq c$).

Now, define LP solution $\hat{x}^{-i}$ by setting $\hat{x}^{-i}_{j, k} = \left( 1 - \delta^i_{j, k} \right) \hat{x}_{j, k}$, where $\delta^i_{j, k} = \max_{\ell \in S_{j, k}} \frac{(A (\bar{x}_i, 0))_\ell}{c_\ell}$. Observe that $A \hat{x}^{-i} \leq c - A (\bar{x}_i, 0)$.

For all $j$ and $k$, we have $\sum_i \delta^i_{j, k} \leq \sum_i \sum_{\ell \in S_{j, k}} \frac{(A (\bar{x}_i, 0))_\ell}{c_\ell} = \sum_{\ell \in S_{j, k}} \sum_i \frac{(A (\bar{x}_i, 0))_\ell}{c_\ell} \leq \lvert S_{j, k} \rvert \leq d$ and therefore $\sum_{i \neq j, i \in N} (1 - \delta^i_{j, k}) \geq n - d - 1$. This gives us
\begin{align*}
	\sum_{i \in N} W^{b_{-i}}(c - A (\bar{x}_i, 0))
	&\ge \sum_{i \in N} \sum_{j \neq i, j \in N} \sum_k b_{j, k} \hat{x}^{-i}_{j, k} \\
	&= \sum_{j \in N} \sum_{i \neq j, i \in N} \sum_k b_{j, k} \left( 1 - \delta^i_{j, k}\right) \hat{x}_{j, k} \\
	&\geq \left( n - d - 1 \right) \sum_{j \in N}  \sum_k b_{j, k} \hat{x}_{j, k} \\
	&= (n - d - 1) W^b(c), 
\end{align*}
which gives the claimed bound as clearly $W^{b_{-i}}(c) \le W^{b}(c)$.
\end{proof}

\begin{proof}[Proof of Lemma~\ref{lem:pip-smooth}]
Consider valuations $v$, bids $b$ and deviations of each player $i \in N$ to $b'_i = 1/2 \cdot v_i.$
Denote the optimal fractional allocation for bids $(b'_i,b_{-i})$ by $\bar{x}_1(b'_i,b_{-i}),\dots,\bar{x}_n(b'_i,b_{-i}).$
Then, by the definition of $b'_i$,
\begin{align*}
	u_i((b'_i,b_{-i}),v_i) = v_i(\bar{x}_i(b'_i,b_{-i})) - b'_i(\bar{x}_i(b'_i,b_{-i})) = b'_i(\bar{x}_i(b'_i,b_{-i})).
\end{align*}
Since $\bar{x}_1(b'_i,b_{-i}),\dots,\bar{x}_n(b'_i,b_{-i})$ is fractional allocation that maximizes declared welfare with respect to bids $(b'_i,b_{-i})$,
\begin{align*}
	b'_i(\bar{x}_i(b'_i,b_{-i})) + W^{b_{-i}}(c)
	&\ge b'_i(\bar{x}_i(b'_i,b_{-i})) + \sum_{j \neq i} b_j(\bar{x}_j(b'_i,b_{-i})) \displaybreak[0]\\
	&\ge b'_i(\bar{x}_i(v)) + W^{b_{-i}}(c - A (\bar{x}_i(v), 0)).
\end{align*}
Rearranging this gives
\begin{align*}
	b'_i(\bar{x}_i(b'_i,b_{-i}))
	&\ge b'_i(\bar{x}_i(v)) - [W^{b_{-i}}(c) - W^{b_{-i}}(c - A (\bar{x}_i(v), 0))].
\end{align*}
Summing over all players and applying Lemma \ref{lem:lemma-4.7-lps}, we obtain
\begin{align*}
	\sum_{i \in N} u_i((b'_i,b_{-i}),v_i) =
	&\sum_{i \in N} b'_i(\bar{x}_i(b'_i,b_{-i})) \displaybreak[0] \\
	&\ge \sum_{i \in N} (b'_i(\bar{x}_i(v)) - [W^{b_{-i}}(c) - W^{b_{-i}}(c - A (\bar{x}_i(v), 0))].) \displaybreak[0] \\
	&\ge \sum_{i \in N} b'_i(\bar{x}_i(v)) - (d+1) \cdot \sum_{i \in N} b_i(\bar{x}_i(b)) \displaybreak[0] \\
	&= \frac{1}{2} \cdot \sum_{i \in N} v_i(\bar{x}_i(v)) - (d+1) \cdot \sum_{i \in N} b_i(\bar{x}_i(b)),
\end{align*}
which completes the proof.
\end{proof}

We conclude this section with the observation that it is crucial to take the detour via relaxation and rounding: 
The mechanism that solves the integral problem optimally has an unbounded Price of Anarchy even when $d = 1$.

\begin{proposition}
\label{proposition:multiunit:integralunsmooth}
The pay-your-bid mechanism that maximizes the declared welfare over integral allocations in a multi-unit auction has a pure Nash equilibrium whose welfare is by a factor $(n-2)/2 = m/2$ smaller than the optimal welfare, where $n$ is the number of bidders and $m$ is the number of goods. 
\end{proposition}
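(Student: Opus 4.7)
The plan is to construct an explicit instance with $n=m+2$ bidders and $m$ identical items in which the integral welfare-maximizing, pay-your-bid mechanism admits a pure Nash equilibrium with welfare only $2$, while the optimal allocation has welfare $m$. I would use two ``bundle'' bidders who each value the full bundle at $2$ and value every strict subset at $0$, together with $m$ unit-demand bidders who value one item at $1$ (all these valuations are monotone, so they are admissible as unconstrained multi-unit valuations). The social optimum then assigns one item to each unit-demand bidder, giving $\mathrm{OPT}=m=n-2$, so a Nash equilibrium with welfare $2$ realizes the target ratio $(n-2)/2$.

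The candidate equilibrium I would analyze is the profile in which each bundle bidder reports her true value $2$ for the full bundle (and $0$ elsewhere) while each unit-demand bidder bids $0$. I would first argue that the declared-welfare-maximizing integer allocation awards the full bundle to one of the two bundle bidders, breaking ties arbitrarily: every integer allocation produces declared welfare at most $2$, and the only allocations achieving declared welfare exactly $2$ are those giving the entire bundle to a bundle bidder. Realized welfare is therefore $2$ and the claimed welfare ratio follows immediately.

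The bulk of the work is verifying that no unilateral deviation is profitable. For a bundle bidder I would observe that she is currently indifferent (utility $0$): bidding less cedes the bundle to the other bundle bidder at the same utility, and bidding more only raises her pay-your-bid payment above her value. For a unit-demand bidder $i\ge 3$ I would argue that any integer allocation awarding her a nonempty set must carry declared welfare strictly greater than $2$, so her bid on that set would have to exceed $2>1=v_i$, yielding negative utility. I expect the main conceptual subtlety, rather than a technical obstacle, to be explaining why \emph{two} bundle bidders (hence $n=m+2$) are needed: with only one bundle bidder there would be nothing preventing her from shading her bid below $1$, after which a unit-demand bidder could profitably undercut the bundle allocation by bidding a tiny positive amount on a single item. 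The second bundle bidder's presence pins the winning bundle bid at the true value $2$, which is exactly the threshold that simultaneously makes the bundle bidders best-respond and blocks every unit-demand deviation, completing the construction.
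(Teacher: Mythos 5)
Your construction is essentially identical to the paper's: $m$ unit-value bidders and two bidders valuing only the full bundle at $2$, with the equilibrium in which the bundle bidders bid truthfully and the small bidders bid $0$, giving welfare $2$ versus $\mathrm{OPT}=m=n-2$. The paper merely asserts that this profile is a pure Nash equilibrium, whereas you additionally spell out the (correct) deviation checks and the role of the second bundle bidder, so your argument is the same proof with the verification made explicit.
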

\begin{proof}
Consider a setting with $n$ bidders and $m = n - 2$ units of an identical good.  The valuations for bidders $i = 1, \ldots, n-2$ are $v_{i, 0} = 0$ and $v_{i, k} = 1$ for all $k \geq 1$. For bidders $i = n - 1$ and $i = n$, we set $v_{i, k} = 0$ for $k < m$ and $v_{i, m} = 2$. It is socially optimal to allocate one item each to bidders $1, \ldots, m$, achieving welfare $m$.

We observe that in this setting, the following bids are a pure Nash equilibrium. For $i = 1, \ldots, m$ we have $b_{i, k} = 0$ for all $k$ and for $i = m+1, m+2$ we have $b_{i, k} = v_{i, k}$ for all $k$. This equilibrium has social welfare $2$.\qquad
\end{proof}

\section{Single Source Unsplittable Flow}
\label{sec:flow}

We consider the single source weighted unsplittable multi-commodity flow problem in which we are given a graph $G = (V,E)$ with edge capacities $c_e$ for each edge $e \in E$. All bidders share a source node $s$ and each bidder $i$ has a sink node $t_i$. He asks for a path connecting $s$ and $t_i$ fulfilling his demand $d_i$. His value for this is $v_i$, and he has no value for less flow than his demand. We assume that the sink $t_i$ and demand $d_i$ for each player is common knowledge, so the player's bid is a claimed value, which will be denoted by $b_i$.

Let $\mathcal{P}_i$ be the paths connecting $s$ and $t_i$. For each $P \in \mathcal{P}_i$, we have a variable $f_{i, P}$ denoting the amount of flow along path $P$. The problem requires single path routing, that is, all the $d_i$ flow satisfying player $i$'s demand must be carried by a single path. We use the following standard LP relaxation that maximizes $\sum_{i \in N} \sum_{P \in \mathcal{P}_i} b_i f_{i, P}$ subject to $\sum_{i \in N} \sum_{P \in \mathcal{P}_i: e \in P} f_{i, P} \leq c_e$ for all $e \in E$ and $\sum_{P \in \mathcal{P}_i} f_{i, P} \leq d_i$ for all $i \in N$.

Substituting $f_{i, P}$ by $d_i \bar{x}_{i, P}$, we get an LP formulation in the spirit of Section~\ref{section:multiunit}. However, this LP is not necessarily sparse, as the column sparsity $d$ corresponds to the maximum path length.
Nevertheless we are able to establish the following theorem.

\begin{theorem}\label{thm:flow}
Suppose the minimum edge capacity is by a logarithmic factor larger than the maximum demand, i.e., $\min_{e \in E} c_e \ge c \epsilon^{-1} \log |E| \max_{i \in N} d_i$ for some $\epsilon > 0$ and an appropriate constant $c>0$. Then there is an oblivious rounding based, pay-your-bid mechanism for the single source unsplittable flow problem with Price of Anarchy at most $2(1+\epsilon)$.
\end{theorem}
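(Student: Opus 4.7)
The plan is to apply Theorem~\ref{thm:smooth} with parameters $\lambda = 1/2$, $\mu = 1$, and $\alpha = 1 + \epsilon$. Two ingredients are required: first, that the pay-your-bid mechanism $M'$ which solves the LP relaxation to optimality is $(1/2, 1)$-smooth for the deviation $b'_i = v_i/2$; and second, that there is a $(1+\epsilon)$-approximate oblivious rounding scheme from fractional flows to integer single-path routings under the stated capacity hypothesis. Given these two pieces, Theorem~\ref{thm:smooth} certifies that the combined mechanism $M$ is $(1/(2(1+\epsilon)), 1)$-smooth, and Theorem~\ref{thm:vasilis-and-eva} then delivers the Price of Anarchy bound $\max(1,\mu)/\lambda = 2(1+\epsilon)$.

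For the smoothness of $M'$, I invoke the observation highlighted in the introduction that the fractional UFP LP can be solved \emph{exactly} by a variant of Ford--Fulkerson that, at each step, augments along the $s$-$t_i$ path in the residual capacity graph maximizing the per-unit bid density $b_i/d_i$. This places the fractional mechanism in the class of greedy maximization mechanisms analyzed by Lucier--Borodin~\cite{LucierBorodin10} and Syrgkanis--Tardos~\cite{SyrgkanisT13}: for an exact (i.e.\ $1$-approximate) greedy their framework yields $(1/2, 1)$-smoothness, and in fact $(1/2,1)$-smoothness with the half-value deviation $b'_i = v_i/2$, which is exactly the form Theorem~\ref{thm:smooth} needs. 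As the paper emphasizes, the greedy-smoothness connection must be applied at the fractional level: at the integer level, maximum-weight single-path routing admits no comparable exact greedy, so one would obtain a much weaker bound.

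For the oblivious rounding I adapt Raghavan--Thompson~\cite{Raghavan1988,RaghavanThompson1987}. Given an LP flow $f$, scale it by $(1-\epsilon/3)$ and then, independently for each bidder $i$, pick a single path $P \in \mathcal{P}_i$ with probability $(1-\epsilon/3) f_{i,P}/d_i$ and select no path with the remaining probability. If the resulting integer routing violates any edge capacity, declare failure and output the empty assignment; this feasibility check is weight-independent, so obliviousness is preserved. Under the hypothesis $\min_e c_e \ge c\, \epsilon^{-1} \log|E|\, \max_i d_i$ for a suitably large constant $c$, the standard Chernoff bound for sums of $[0,d_i]$-valued independent random variables, combined with a union bound over $|E|$ edges, gives $\Pr[\text{failure}] \le 1/|E|^2$. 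Hence for every bidder $i$ and every value profile $v$, the expected realized value is at least $(1-\epsilon/3)(1 - 1/|E|^2)\, v_i(x') \ge v_i(x')/(1+\epsilon)$, yielding the claimed $(1+\epsilon)$-approximate oblivious rounding.

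The main obstacle I anticipate is the first step: verifying cleanly that the max-density augmenting-path procedure fits the Lucier--Borodin greedy template — a bidder ``wins'' an incremental amount of allocation at a time in a feasibility-preserving way that is amenable to the half-value deviation analysis — and that this greedy truly attains the LP optimum rather than merely a constant-factor approximation. A direct packing-LP argument mirroring Lemma~\ref{lem:pip-smooth} would yield only $(1/2, d+1)$-smoothness with $d$ equal to the maximum path length, which is far too weak; the flow-specific structure (namely, that the per-bidder flows $A\bar{x}^*_i$ aggregate to a feasible single-source flow and hence sum to at most $c$ edge-wise) is what makes the $\mu = 1$ bound possible. The remaining pieces — the Chernoff analysis for rounding and the composition via Theorem~\ref{thm:smooth} — are routine.
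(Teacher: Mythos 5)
Your overall architecture is exactly the paper's: solve the fractional single-source flow LP exactly, show the fractional pay-your-bid mechanism is $(1/2,1)$-smooth for the deviation $b_i' = \frac{1}{2}v_i$, compose with a Raghavan--Thompson $(1+\epsilon)$-approximate oblivious rounding, and invoke Theorem~\ref{thm:smooth} together with Theorem~\ref{thm:vasilis-and-eva}. The rounding and composition steps are fine (the paper gives even less detail on Raghavan--Thompson than you do; just note that the per-bidder guarantee $\Ex{v_i(x)} \ge \frac{1}{1+\epsilon} v_i(x')$ must survive the global capacity-violation event --- the clean way is to bound the failure probability conditioned on bidder $i$'s path being selected, rather than multiplying by $(1-1/|E|^2)$ as if selection and failure were independent, since otherwise bidders with a tiny fractional allocation are not covered).

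The genuine gap is precisely the step you flag as an ``obstacle'': the $(1/2,1)$-smoothness of the exact fractional greedy cannot be obtained by citing the Lucier--Borodin / Syrgkanis--Tardos greedy framework as a black box. Those results concern integral, priority-style greedy $c$-approximation algorithms, and exact pay-your-bid declared-welfare maximization is in general \emph{not} smooth (see Proposition~\ref{proposition:multiunit:integralunsmooth}); what makes the constants $(1/2,1)$ with the half-value deviation work here is the specific combinatorial structure of the single-source problem. The paper closes this step with a self-contained argument: for unit demands the routable terminal sets form a gammoid, and a basis-exchange matching $m$ between the greedy-selected basis $J$ and an optimal basis $I$ (existence via the matroid exchange axiom and Hall's theorem) yields, for every $i \in I$, the inequality $u_i((v_i/2, b_{-i}), v_i) \ge \frac{1}{2}v_i - b_{m(i)}$, which sums to $(1/2,1)$-smoothness for half-value deviations; this is then extended to the fractional problem with arbitrary demands, where augmenting paths in order of $b_i/d_i$ solve the LP exactly (the polymatroid analogue of the matroid greedy). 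Without this argument --- or an equally explicit verification that your max-density augmenting-path greedy both attains the LP optimum and satisfies the half-value smoothness inequality --- the parameters $\lambda = 1/2$, $\mu = 1$ are asserted rather than proved, and that is the entire non-routine content of the theorem.
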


For the setting considered here Raghavan and Thompson \cite{Raghavan1988,RaghavanThompson1987} present a $(1+\epsilon)$-approximate oblivious rounding scheme. We will argue below that the canonical LP relaxation that this rounding scheme is based on can be solved exactly using a greedy heuristic, which is $(1/2,1)$-smooth for deviations to $b'_i = \frac{1}{2}v_i.$ The Price of Anarchy bound then follows by Theorem~\ref{thm:smooth}.

To show smoothness we proceed in two steps. We first consider the integral problem and the special case when all demands are $1$. Afterwards we consider the fractional problem with arbitrary demands.

It is not hard to see that in the integral problem when all demands are $1$, the subsets of terminals whose demand can be routed defines a matroid, known as gammoid (see \cite{Frank}). We start by showing that the greedy algorithm for matroids is $(\frac{1}{2},1)$-smooth for deviations to $b'_i = v_i/2$.

\begin{lemma}
The greedy algorithm for matroids is  $(\frac{1}{2},1)$-smooth for deviations to half the value.
\end{lemma}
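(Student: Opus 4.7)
With pay-your-bid and deviation $b'_i=v_i/2$, bidder $i$'s utility equals $\tfrac{v_i}{2}\mathbf{1}[i\in f(b'_i,b_{-i})]$, and the $(1/2,1)$-smoothness inequality reduces to the charging bound
\[
\sum_{i\in X^*\setminus f(b'_i,b_{-i})}\frac{v_i}{2}\ \le\ \sum_{j\in X}b_j,
\]
where $X=f(b)$ is the greedy output and $X^*$ is a $v$-maximum independent set. My plan is to establish this by exhibiting an injection $\pi:X^*\hookrightarrow X$ such that $b_{\pi(i)}\ge v_i/2$ for every $i\in X^*\setminus f(b'_i,b_{-i})$; summing and using injectivity then yields the desired bound.

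To construct $\pi$, I extend $X$ and $X^*$ to bases of the matroid by adjoining zero-value dummies if necessary and invoke Brualdi's strong basis exchange theorem, which yields a bijection $\pi:X^*\to X$ satisfying $(X\setminus\{\pi(i)\})\cup\{i\}\in\mathcal{I}$ for every $i\in X^*$; I choose $\pi$ to fix $X\cap X^*$ pointwise. The bid inequality $b_{\pi(i)}\ge v_i/2$ is then obtained by case analysis on whether $i\in X$ and on how $b_i$ compares to $v_i/2$: when $i\in X^*\cap X$ is rejected under its own deviation, necessarily $b_i>v_i/2$ and $\pi(i)=i$, so the bound is immediate; when $i\notin X$, the greedy prefix relevant to the rejection of $i$---namely the prefix $Y$ of bids strictly above $v_i/2$ in the run on $(b'_i,b_{-i})$ if $b_i\le v_i/2$, or the prefix $Z$ of bids strictly above $b_i$ in the run on $b$ if $b_i>v_i/2$---is a subset of $X$, makes $Y\cup\{i\}$ (resp.\ $Z\cup\{i\}$) dependent, and the unique circuit through $i$ in that set must contain $\pi(i)$ by circuit elimination against the independent set $(X\setminus\{\pi(i)\})\cup\{i\}$, giving $b_{\pi(i)}>v_i/2$ (resp.\ $>b_i>v_i/2$).

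The main obstacle is verifying that the greedy prefix used in the circuit argument indeed sits inside $X$. This is clear when $b_i\le v_i/2$ because then $i$ is processed strictly after all bids above $v_i/2$ in both runs, so the selection among those high bids is unaffected by $i$; but when $b_i>v_i/2$ and $i\notin X$ one has to argue that the set of ``high-bid'' blockers available in the original run on $b$ transfers correctly to bound $\pi(i)$. The resolution is that any element processed before $i$ in the deviated profile is also processed before $i$ in the original profile, so decreasing $i$'s bid only enlarges the set of potential blockers; combined with the max-$b$-weight optimality of $X$ and the exchange axiom, this pins $\pi(i)$ into the high-bid part of $X$ and closes the argument.
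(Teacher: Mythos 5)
Your proof is correct (up to the usual tie-breaking caveats, which the paper's own argument also glosses over), and it shares the paper's skeleton: both arguments pass to a basis-exchange bijection between the optimum and the greedy output and charge, for each optimal element that loses under its half-value deviation, the amount $v_i/2$ to the bid of its partner in the greedy set, splitting into the cases $i\in X$ and $i\notin X$. The genuine difference is the orientation of the exchange and how the key blocking step is justified. The paper matches the optimum $I$ to the greedy basis $J$ so that $I-i+m(i)$ is a basis and asserts directly that if $v_i/2>b_{m(i)}$ then the greedy run on $(b'_i,b_{-i})$ accepts $i$; you take the opposite, Brualdi-type orientation, $(X\setminus\{\pi(i)\})\cup\{i\}$ independent, and prove the contrapositive: if $i$ is rejected, the circuit it closes inside the accepted high-bid prefix---a subset of $X$, since lowering $i$'s bid cannot change the selection among elements processed before it---must contain $\pi(i)$, since otherwise that circuit would lie inside the independent set $(X\setminus\{\pi(i)\})\cup\{i\}$; hence $b_{\pi(i)}\ge v_i/2$. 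This orientation is precisely what makes the circuit argument go through, and it renders the crucial acceptance/blocking step fully explicit and self-contained, whereas the paper's orientation leaves that step as a one-line claim. Two small simplifications: in the subcase $b_i>v_i/2$, $i\notin X$, you need nothing about the deviated run at all (rejection in the original run already gives the circuit in $Z\cup\{i\}$ with $Z\subseteq X$ and $b_{\pi(i)}\ge b_i>v_i/2$), so the ``transfer'' discussion in your last paragraph is superfluous; and for robustness to ties, define the prefixes as the elements accepted before $i$ is processed (all with bid at least $v_i/2$, resp.\ at least $b_i$) rather than via strict inequalities. Finally, the completion to bases is really only needed for $X^*$ (the greedy output is already a basis), and your circuit argument in any case places $\pi(i)$ inside the genuinely selected high-bid part of $X$, so no dummy element ever enters the charge.
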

\begin{proof}
Consider an optimal independent set $I$, and a bid profile $b$. Let $J$ be the basis selected by the greedy algorithm on bids $b$. For two bases of a matroid $I$ and $J$, there is a matching between the elements so that for all $i \in I$ the set $I-i+m(i)$ is a basis, where $m(i)\in J$ is the element matched to $i$. Using this matching, we get that for any $i\in I$ the deviation $b'_i=v_i/2$ satisfies $u_i(b_i',b_{-i}) \ge \frac{1}{2} v_i -b_{m(i)}$. To see why this is true, we consider two cases. If $b_i'=v_i/2>b_{m_i}$ then the greedy algorithm on bids $(b_i',b_{-i})$ will consider $i$ before $m(i)$ and will add $i$ to the independent set, so $u_i(b_i',b_{-i}) = v_i-b_i=\frac{1}{2}v_i$, and hence the inequality follows. If $b_i'=v_i/2 \le b_{m_i}$ then the right hand side is non-positive, and we have $u_i(b_i',b_{-i})\ge 0$ so the inequality is true in this case also. For $i \notin I$ clearly $u_i(b_i',b_{-i}) \ge 0$. summing over all $i$ we get the claimed smoothness bound.

For completeness, we also include the proof that the matching used above exists. For a subset $A\subset I$ let $e(A)$ be the set of elements in $J$ that can be exchanged with an element in $A$. By Hall's theorem we only need to prove that $|A| \le |e(A)|$ for all $A$. Suppose this is not true, and let $A$ be a set with $|A| >|e(A)|$. By the matroid exchange property, $I\setminus A$ can be extended to a basis from elements of $J$. Let $B \subset J$ denote the set that is added to $I\setminus A$, so $J'=(I \setminus A) \cup B$ is a basis. We claim that $B \subset e(A)$. To see this for an element $j \in B$ use the matroid exchange property to extend the set $(I \setminus A)+j$ to a basis of the matroid from the set $I$. The element left out must be an element in $A$ and has now been exchanged for $j$.
\end{proof}

Now consider the fractional flow problem. In case of unit demands, we have just seen that the greedy algorithm is $(1/2,1)$ smooth, and it solved the integer problem. Notice that the greedy algorithm is simply the Ford-Fulkerson augmenting paths algorithm prioritizing terminals with higher bids $b_i$. Now consider the problem with arbitrary demands. We claim that the greedy algorithm using repeated augmenting paths to find the maximum flow, prioritizing terminals with higher $b_i/d_i$ (willingness to pay per unit of flow) optimally solves the fractional problem, and is $(\frac{1}{2},1)$-smooth for deviations to $b'_i = v_i/2$. For the case when demands are integers this can be thought of as a corollary of the above lemma for matroids, and can be proved analogously in the general case.

\begin{lemma}
The above fractional routing algorithm optimally solves the fractional single source routing problem, and is $(\frac{1}{2},1)$-smooth for deviations to half the value.
\end{lemma}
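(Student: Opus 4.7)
The plan is to lift the matroid-greedy argument of the previous lemma to the fractional setting by exploiting the polymatroid structure of the set of feasible demand vectors. First, for optimality, I would observe that
\[
P \;=\; \bigl\{\, y \in \reals^n_{\ge 0} : y_i \le d_i \text{ for all } i \text{ and } y \text{ is realizable as a feasible fractional flow} \,\bigr\}
\]
is a polymatroid, because by max-flow/min-cut the set function $\rho(S) = \max\{\sum_{i\in S} y_i : y \in P,\, y_i = 0 \text{ for } i \notin S\}$ is submodular. Edmonds' polymatroid greedy theorem then shows that processing bidders in decreasing order of per-unit bid $b_i/d_i$ and saturating each $y_i$ at its current maximum feasible value---which is exactly what repeated augmenting paths prioritized by $b_i/d_i$ accomplish---maximizes $\sum_i (b_i/d_i)\, y_i$ over $P$, establishing optimality of the fractional greedy.

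For smoothness, I would build a fractional analog of the matroid exchange matching used in the previous lemma. Letting $y^*$ be the optimal flow for true values $v$ and $y$ the greedy output on bids $b$, the target is a nonnegative matrix of weights $\mu_{ij}$ satisfying $\sum_j \mu_{ij} = y^*_i$ for every $i$ and $\sum_i \mu_{ij} \le y_j$ for every $j$, with the property that for each $i$ the flow obtained by raising $y_i$ to $y^*_i$ and lowering each $y_j$ ($j \ne i$) by $\mu_{ij}$ lies in $P$. When demands are integers I would obtain this by splitting bidder $i$ into $d_i$ unit-demand copies of per-unit value $v_i/d_i$, so that feasibility becomes a gammoid and the matroid matching from the previous lemma applies verbatim; the rational case then follows by scaling, and arbitrary real demands by a continuity/limit argument (or, equivalently, by an LP-duality argument on the polymatroid base polytope).

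Given $\mu$, I would repeat the matroid two-case analysis on a per-unit basis. For each unit matched from $j$ to $i$: if $v_i/(2 d_i) > b_j/d_j$, the deviation $b'_i = v_i/2$ outbids $j$ on that unit, so greedy on $(b'_i, b_{-i})$ routes the unit to $i$ and contributes utility density $v_i/(2 d_i)$; otherwise the foregone utility density is at most $b_j/d_j$, which is already covered by $j$'s payment in the original outcome. Combining cases gives
\[
u_i(b'_i, b_{-i}) \;\ge\; \frac{v_i}{2 d_i}\, y^*_i \;-\; \sum_j \frac{b_j}{d_j}\, \mu_{ij},
\]
and summing over $i$ while using $\sum_i \mu_{ij} \le y_j$ together with $p_j(b) = (b_j/d_j)\, y_j$ yields $\sum_i u_i(b'_i, b_{-i}) \ge \tfrac{1}{2}\, OPT(v) - \sum_j p_j(b)$, which is precisely the $(1/2, 1)$-smoothness bound for deviations to half the value.

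The main obstacle will be the polymatroid exchange matching together with the verification that the augmenting-paths procedure on $(b'_i, b_{-i})$ actually realizes the per-unit bidding competition implicit in the case analysis. Integer demands reduce cleanly to the previous lemma via the split-into-copies trick; the general-demand case requires either scaling plus continuity or a direct proof on polymatroid base polytopes, and a careful check that greedy honors priorities on a unit-by-unit basis in the same way as the matroid greedy.
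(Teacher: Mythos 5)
Your proposal is correct and takes essentially the same route the paper intends: the paper offers no detailed argument for this lemma, only the remark that for integer demands it is a corollary of the matroid (gammoid) lemma and that the general case is proved analogously, which is exactly your split-into-unit-copies reduction plus a polymatroid/exchange generalization. Your additional use of Edmonds' polymatroid greedy theorem to justify optimality of the augmenting-path greedy prioritized by $b_i/d_i$ is a sensible way to make explicit what the paper leaves implicit.
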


Importantly, the reference to greedy in the above proof is on the fractional level, as the greedy algorithm for the integral problem with general demands can be as bad as an $\Omega(\sqrt{|E|})$ (see \cite{Kleinberg96}). Also, as we show next, solving the integral problem optimally again leads to an unbounded Price of Anarchy, even if there is a single source, a single target and just one unit capacity edge between the two.

\begin{proposition}
The pay-your-bid mechanism that solves the integral single source unsplittable flow problem optimally has a Price of Anarchy of at least $\frac{m}{2}$ even if there is a single source, a single target and a unit-capacity edge connecting the two.
\end{proposition}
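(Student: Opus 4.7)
The plan is to lift the two-tier construction used in Proposition~\ref{proposition:multiunit:integralunsmooth} to the flow setting, exploiting the interplay between the integrality of the allocation and the single-path routing constraint. I would take the graph to consist of source $s$, target $t$, and the single edge $(s,t)$ of capacity $1$. The bidders split into two ``large'' bidders, each with demand $d_i = 1$ and value $v_i = 2$, together with $m$ ``small'' bidders, each with demand $d_i = 1/m$ and value $v_i = 1$. The socially optimal integral routing packs all $m$ small bidders simultaneously (their total demand is exactly $1$) and attains welfare $m$.

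Next, I would exhibit a pure Nash equilibrium of welfare $2$: both large bidders report their true value $2$, and every small bidder reports $0$. At these declared values the only feasible integer routings carrying positive declared welfare are either a single large bidder (declared welfare $2$) or an arbitrary subset of small bidders (declared welfare $0$), since any bundle combining a large and a small bidder violates the unit capacity. Thus the mechanism routes one of the large bidders, realizing welfare $2$. Comparing to the optimum $m$ yields the claimed ratio of $m/2$.

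The main step --- really the only one requiring care --- is verifying the equilibrium conditions, and among these the non-trivial one is the deviation of a small bidder. A large bidder has utility $0$ in the candidate profile, any upward deviation gives strictly negative utility, and any downward deviation still results in the other large bidder winning, so utility stays at $0$. For a small bidder $i$, the key observation is that the unsplittable demand $d_i=1/m$ cannot be co-routed with any large bidder because $d_i + 1 > 1$ exceeds the capacity; hence for $i$ to be served after unilaterally deviating to a bid $b$, the declared welfare of some allocation containing $i$ must strictly exceed the declared welfare $2$ of routing a large bidder alone. Since all other small bidders still bid $0$, this forces $b > 2$, which gives bidder $i$ utility at most $1 - b < 0$. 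Therefore no small bidder has a profitable deviation, the profile is a pure Nash equilibrium of welfare $2$, and the Price of Anarchy is at least $m/2$ as claimed.
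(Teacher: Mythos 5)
Your construction and equilibrium coincide with the paper's own proof (two big players with demand $1$ and value $2$ bidding truthfully, $m$ small players with demand $1/m$ and value $1$ bidding $0$; optimum $m$ versus equilibrium welfare $2$), the only difference being that you spell out the deviation analysis that the paper leaves implicit, and you do so correctly. One cosmetic remark: a small bidder deviating to exactly $b=2$ could be served under adversarial tie-breaking rather than needing to strictly exceed $2$, but its utility would then be $1-2<0$, so your conclusion is unaffected.
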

\begin{proof}
 Consider a network with two nodes, one of which is the source and the other is the target, and a single edge with unit capacity between the two.
There are $m+2$ players. Two big players with demand $1$ and value $2$, and $m$ small players with demand $1/m$ and value $1$.

Having the big players both bid $2$ and the small players bid $0$ is a pure Nash equilibrium with social welfare $2$, where the optimal is $m$.\qquad
\end{proof}

\section{Max-TSP}
\label{sec:max-tsp}

In the asymmetric maximization version of the traveling salesperson problem, one is given a complete digraph $G = (V, E)$ with non-negative weights $(w_e)_{e \in E}$. Players are the edges with value $w_e$ for being selected, and the mechanism aims to select a Hamiltonian cycle $C$ that maximizes $\sum_{e \in C} w_e$. We show how existing combinatorial algorithms for this problem can be interpreted as relax-and-round algorithms, and derive the following theorem.

\begin{theorem}
There is a pay-your-bid mechanism for the maximum traveling salesman problem based on oblivious rounding that achieves a Price of Anarchy of $9$.
\end{theorem}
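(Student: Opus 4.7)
The plan is to re-interpret the $3/2$-approximation of Paluch et al.~\cite{PaluchEZ12} as an instance of the relax-and-round template and then invoke the stronger form of Theorem~\ref{thm:smooth}. Concretely, I would identify the combinatorial object that the first phase of their algorithm optimizes over (this plays the role of the relaxation $\Pi'$) and view the ``stitching'' step that turns it into a Hamiltonian cycle as the rounding scheme. The whole proof then reduces to two tasks: (i) checking that the stitching is a $3/2$-approximate \emph{oblivious} rounding scheme, and (ii) showing that the associated pay-your-bid mechanism on the relaxation is $(1/2, 3)$-smooth for deviations to half the value. Combining these with Theorem~\ref{thm:smooth} yields a $(1/3, 3)$-smooth derived mechanism, hence a Price of Anarchy of $\max(3,1)/(1/3) = 9$.

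For (i), obliviousness amounts to checking that the stitching depends only on the structure of the relaxed solution and on fixed tie-breaking rules, never on the actual edge weights. That it is $3/2$-approximate on a \emph{per-edge} basis---not only on the aggregate objective, which is what the classical approximation analysis delivers---should follow from a local refinement of the Paluch et al.\ exchange argument, in which every kept edge is assigned to a relaxed edge of weight at least as large and no relaxed edge is charged by more than a factor $3/2$.

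For (ii), I would adapt the novel combinatorial argument the authors use to prove $(1/2, 3)$-smoothness for the basic cycle-cover relaxation. Fix valuations $v$ and bids $b$, let $x^*$ be the optimal relaxed allocation for $v$, and consider the deviation $b'_i = \tfrac{1}{2} v_i$. Using optimality of the relaxed allocation after $i$'s unilateral deviation, one shows that inserting $i$ ``displaces'' only a bounded amount of declared welfare collected from $b$; summing over $i$ and using the fact that each edge appears in only a bounded number of local constraints of the relaxation then produces the required inequality. The main obstacle is this combinatorial accounting: the Paluch et al.\ relaxation is more intricate than a plain cycle cover---it allows mixed path-cycle covers or excludes short cycles, depending on the precise formulation---so one must verify that the displacement charged to each incumbent edge is still bounded by $3$ times its declared value uniformly. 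Once this is in place, pay-your-bid pricing converts the displacement bound into the claimed smoothness inequality with $\lambda=1/2$ and $\mu=3$, and applying Theorem~\ref{thm:smooth} with $\alpha = 3/2$ delivers the Price of Anarchy bound of $9$.
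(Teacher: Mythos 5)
Your high-level template is the same as the paper's: view the first phase of Paluch et al.\ as the relaxation (a cycle cover without $2$-cycles but with half-edges), view the extraction of a Hamiltonian cycle as a $3/2$-approximate oblivious rounding scheme, prove $(1/2,3)$-smoothness for deviations to half the value, and invoke Theorem~\ref{thm:smooth} to get $\max(3,1)/\bigl((1/2)/(3/2)\bigr)=9$. However, the step you flag as ``the main obstacle'' in (ii) is exactly where your plan, as written, breaks. The paper explicitly observes that the direct adaptation of the cycle-cover displacement argument (Lemma~\ref{lem:cc-smooth}) to the half-edge relaxation does \emph{not} give a displacement factor of $3$ against the relaxation's own feasible set: because a player now effectively controls two (half-)edges, the bound degrades to $6$, which via Theorem~\ref{thm:smooth} only yields a Price of Anarchy of $18$. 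The missing idea is to prove the $(1/2,3)$-smoothness only against the \emph{subset} of relaxed solutions that are Hamiltonian tours, i.e., with benchmark $OPT_T(v)$ rather than the relaxed optimum (Lemma~\ref{lem:half-edges} in the paper), and then to note that this restricted guarantee still suffices in the proof of Theorem~\ref{thm:smooth}, since the only role of the relaxed optimum there is as an upper bound on the original optimum, a step that becomes unnecessary. Without this restriction-to-tours idea your accounting cannot be pushed below $6$, so the claimed uniform bound of $3$ ``against the relaxation'' is not just unverified---it is false.

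A secondary issue concerns (i). The oblivious rounding scheme is not the deterministic ``stitching'' of Paluch et al.: their algorithm returns the heaviest of three tours, and that selection uses the edge weights, so it is not oblivious. The paper's rounding instead derives three Hamiltonian cycles such that every half-edge of the relaxed solution lies in at least one of them, and outputs one of the three \emph{uniformly at random}; the per-edge factor $3/2$ then follows from containment probabilities (an edge is retained with probability at least proportional to $2/3$ of its half-edge contribution), not from any charging of ``kept edges to relaxed edges of at least the same weight.'' Indeed, a weight-based charging cannot certify obliviousness, since the definition requires $\Ex{w(x)} \ge \frac{1}{\alpha} w(x')$ simultaneously for all weight functions $w$, which for per-player values forces a probabilistic containment guarantee. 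So you would need to replace your proposed argument for (i) by the randomized best-of-three construction, in the same spirit as the random edge removal used for Fisher's algorithm in Section~\ref{sec:max-tsp}.
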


We present a proof based on the algorithm of Fisher \cite{Fisher1979} that yields a Price of Anarchy bound of $12$; in Appendix \ref{app:tsp} we show how to improve this bound using the algorithm of Paluch et al.~\cite{PaluchEZ12} instead.

We will first argue how the algorithm of Fisher \cite{Fisher1979} can be interpreted as an oblivious, $2$-approximate rounding scheme that relaxes the problem of finding a maximum-weight Hamiltonian cycle to the problem of finding a maximum-weight cycle cover (defined below). We will then argue that the pay-your-bid mechanism that finds a cycle cover is $(1/2,3)$-smooth for deviations to half the value. Together with Theorem \ref{thm:smooth} these two facts imply the claimed Price of Anarchy bound.  

Fisher's algorithm uses cycle covers as relaxed solutions. A collection of cycles $C_1, \ldots, C_k$ in a (di-)graph is called a \emph{cycle cover} if each vertex of the graph is contained in exactly one of the cycles. A maximum-weight cycle cover can be computed in polynomial time by computing a maximum-weight perfect matching in a suitably defined bipartite graph. In order to approximate the max-weight TSP tour, one first determines a max-weight cycle cover $C_1, \ldots, C_k$, and then from each of the obtained cycles the minimum-weight edge is dropped, resulting in a collection of vertex-disjoint paths $P_1, \ldots, P_k$. These paths are connected in an arbitrary way to obtain a Hamiltonian cycle $C$. Going from $C_i$ to $P_i$, we lose at most half of the weight of this respective cycle. As all weights are non-negative, no weight is lost going from $P_1, \ldots, P_k$ to $C$. In combination, we have $\sum_{e \in C} w_e \geq \sum_{i \in [k]} \sum_{e \in P_i} w_e \geq \frac{1}{2} \sum_{i \in [k]} \sum_{e \in C_i} w_e$.

This ``rounding'' algorithm can also be modified to work in an oblivious way without loss in the worst case by removing one edge uniformly at random from each cycle. This way, for each edge that was contained in the cycle cover, the individual probability to be also included in the output is at least $\frac{1}{2}$.

To be able to apply Theorem \ref{thm:smooth} and obtain the Price of Anarchy bound it remains to show that the pay-your-bid mechanism that finds a cycle cover is $(1/2,3)$-smooth for deviations to half the value. 
\begin{lemma}\label{lem:cc}
The pay-your-bid mechanism for computing an optimal cycle cover is $(1/2,3)$-smooth for deviations to $b'_i = \frac{1}{2}v_i.$
\end{lemma}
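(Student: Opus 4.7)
The plan is to reduce this to the sparse PIP framework already established in Lemma~\ref{lem:pip-smooth}. The key observation is that a cycle cover in the digraph $G$ is in bijection with a perfect matching in the bipartite graph $B = (V_o \cup V_i, E_B)$ obtained by splitting each vertex $v$ into an out-copy $v_o$ and an in-copy $v_i$, with each directed edge $e = (u,v)$ mapping to the bipartite edge $(u_o, v_i)$. The mechanism at hand is therefore exactly the pay-your-bid maximum-weight perfect matching mechanism on $B$. Treating each edge $e$ as a single-option bidder ($K = 1$), the variable $x_e$ appears in exactly two global packing constraints---the out-degree at the tail $u$ and the in-degree at the head $v$---making this a sparse packing integer program with column sparsity $d = 2$.

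With this identification in place, one invokes Lemma~\ref{lem:pip-smooth} with $d = 2$ to obtain $(1/2, d+1) = (1/2, 3)$-smoothness for deviations $b'_e = \tfrac{1}{2} v_e$. Two minor sanity checks are needed: first, the canonical LP relaxation uses inequality degree constraints, but in the complete digraph with non-negative values its optimum coincides with the max-weight perfect matching (any partial matching can be padded by zero-weight edges without decreasing the objective), so solving the LP is equivalent to computing the optimal cycle cover; second, the per-bidder budget $x_e \leq 1$ is automatically satisfied. Both checks are immediate in the max-TSP setting.

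A direct combinatorial proof also works and mirrors the exchange step of Lemma~\ref{lem:lemma-4.7-lps} in the matching setting. Let $M^v$, $M^b$, and $M^e$ denote the optimal matchings for $v$, $b$, and $(b'_e, b_{-e})$ respectively. For each $e = (u,v) \in M^v$ with $e \notin M^e$, I would build a perfect matching $\hat{M}_e$ containing $e$ by removing the two edges $f_u, f_v \in M^b$ incident to $u_o$ and $v_i$, adding $e$, and reconnecting the two newly unmatched bipartite vertices via an auxiliary edge. Combining optimality of $M^e$ for bids $(b'_e, b_{-e})$ with optimality of $M^b$ for bids $b$ gives $\tfrac{1}{2} v_e \leq b_{f_u} + b_{f_v} + b_{\mathrm{aux}}$. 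Summing over $e \in M^v$ and charging each $f \in M^b$ at most three times---as $f_u(e)$ for the unique $e$ sharing its tail, as $f_v(e')$ for the unique $e'$ sharing its head, and as an auxiliary swap edge for at most one exchange---yields the smoothness bound.

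The main obstacle in the direct argument is handling the degenerate case where the natural auxiliary edge would be a self-loop (disallowed in max-TSP), which forces a three-edge rather than two-edge swap and is the combinatorial source of the factor $3$ instead of $2$. The PIP framework absorbs this subtlety uniformly through the $d+1$ factor in Lemma~\ref{lem:lemma-4.7-lps}, which is why the plug-in proof via Lemma~\ref{lem:pip-smooth} is the cleanest route.
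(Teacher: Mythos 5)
Your primary route---recasting the cycle-cover mechanism as a $d=2$ sparse PIP and invoking Lemma~\ref{lem:pip-smooth}---has a genuine gap. A cycle cover is a \emph{perfect} matching in the split bipartite graph with the diagonal edges $(u_o,u_i)$ deleted, i.e.\ the degree constraints are equalities and self-loops are forbidden. The canonical packing relaxation with inequality constraints is a different problem, and your ``pad with zero-weight edges'' argument fails exactly when the two exposed copies belong to the same vertex: with three vertices $u,v,w$, bids $b_{(u,v)}=b_{(v,u)}=1$ and all other bids $0$, the degree-constrained (inequality) LP/matching optimum is $2$, obtained by the matching $\{(u_o,v_i),(v_o,u_i)\}$ which cannot be completed because only the forbidden edge $(w_o,w_i)$ remains, whereas every cycle cover is a triangle of weight $1$. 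So the packing-LP mechanism and the optimal-cycle-cover mechanism select different allocations and charge different payments, and smoothness of the former does not transfer to the latter; moreover the proof of Lemma~\ref{lem:lemma-4.7-lps} works by scaling a feasible solution down to fit reduced capacities, which is exactly what the equality (exact-cover) constraints rule out. This is the reason the paper cannot reuse the PIP lemma here and instead proves a separate exchange bound (Lemma~\ref{lem:cc-smooth}).

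Your fallback ``direct combinatorial'' argument is, in outline, the paper's actual proof: force a comparison edge $e$ into the optimal cycle cover for $b$ by deleting the $E_C$-edge leaving $e$'s tail and the $E_C$-edge entering $e$'s head and reconnecting, with a three-edge swap in the degenerate case where the reconnection would be a self-loop; since the comparison solution is itself a cycle cover, each $E_C$-edge can play each of the (at most three) roles for at most one comparison edge, giving the charging factor $3$ and hence $(1/2,3)$-smoothness. As written, though, it is only a sketch and two details need fixing: the displacement bound should be the total bid of the \emph{removed} $E_C$-edges (the reconnecting ``auxiliary'' edges are added back with non-negative bid, so they do not belong on the right-hand side; the third charge is for the removed out-edge of $e$'s head in the degenerate case), and the degenerate three-edge swap must be verified to yield a valid cycle cover (no self-loops, in-/out-degrees preserved), which is the case analysis the paper carries out around Figure~\ref{fig:cycle-cover}. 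With those details supplied, the direct route coincides with the paper's proof; the plug-in route via Lemma~\ref{lem:pip-smooth} should be abandoned.
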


Our proof of this lemma follows a similar pattern as our proof for sparse packing integer programs. The idea is again to bound the net loss in declared welfare for a given feasible allocation relative to the optimal declared welfare. Denote by $\mathcal{C}$ the set of all cycle covers and by $\mathcal{C}_{e}$ the set of all cycle covers that include edge $e \in E.$ Then $W^{b_{-i}}(\mathcal{C}) - W^{b_{-i}}(\mathcal{C}_{e})$ is the social cost of including $e \in E$ in the cycle cover. Specifically, we show the following auxiliary lemma.

\begin{lemma}\label{lem:cc-smooth}
Consider bids $b$. Let $C_1, \dots, C_k$ be the cycle cover that maximizes reported welfare for bids $b$ and use $E_{C}$ to denote the set of edges used in this cycle cover. Consider any other cycle cover $C'_1,\dots,C'_\ell$ with edge set $E_{C'}$. Then,
\[
	\sum_{i \in N: e_i \in E_{C'}} \left(W^{b_{-i}}(\mathcal{C}) - W^{b_{-i}}(\mathcal{C}_{e_i}) \right) \le 3 \cdot W^b(\mathcal{C}).
\]
\end{lemma}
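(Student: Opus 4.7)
The plan is to adapt the swap-based argument used in the proof of Lemma~\ref{lem:lemma-4.7-lps}. For each $e_i = (u_i, v_i) \in E_{C'}$ I will construct a cycle cover $D_i \in \mathcal{C}_{e_i}$ by a small local modification of $E_C$. Since $E_C$ is the reported-welfare-optimal cover, $W^{b_{-i}}(\mathcal{C}) \le W^b(\mathcal{C}) = \sum_{e \in E_C} b_e$; and since $e_i \in D_i$, $W^{b_{-i}}(\mathcal{C}_{e_i}) \ge \sum_{e \in D_i} b_e - b_{e_i}$. Consequently $W^{b_{-i}}(\mathcal{C}) - W^{b_{-i}}(\mathcal{C}_{e_i}) \le \sum_{e \in E_C} b_e - \sum_{e \in D_i} b_e + b_{e_i}$, which equals the $b$-weight of the $E_C$-edges deleted by the swap minus the $b$-weight of the non-$e_i$ edges added. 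Bounding the sum of these residual quantities by $3\, W^b(\mathcal{C})$ is what I need to achieve.

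Write $o(u) = (u, \pi_C(u))$ and $\iota(v) = (\pi_C^{-1}(v), v)$ for the outgoing/incoming edges of $u,v$ in $E_C$, and set $x_i := \pi_C(u_i)$, $y_i := \pi_C^{-1}(v_i)$, so $o(u_i) = (u_i, x_i)$ and $\iota(v_i) = (y_i, v_i)$. The baseline construction is the two-edge swap $D_i = (E_C \setminus \{o(u_i), \iota(v_i)\}) \cup \{e_i, (y_i, x_i)\}$. Every vertex retains in- and out-degree exactly one, and provided $y_i \neq x_i$ no self-loop is introduced, so $D_i$ is a valid cycle cover. The resulting bound on the loss is $b_{o(u_i)} + b_{\iota(v_i)} - b_{(y_i, x_i)} \le b_{o(u_i)} + b_{\iota(v_i)}$. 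The case $e_i \in E_C$ is subsumed by this formula (it collapses to $D_i = E_C$ with loss zero).

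The main obstacle is the degenerate case $y_i = x_i$, which occurs exactly when $E_C$ contains the length-two shortcut $u_i \to x_i \to v_i$ and so the baseline closing edge would be the forbidden self-loop $(x_i, x_i)$. I handle it with a three-edge swap: letting $z_i = \pi_C(v_i)$, take $D_i = (E_C \setminus \{(u_i, x_i), (x_i, v_i), (v_i, z_i)\}) \cup \{e_i, (v_i, x_i), (x_i, z_i)\}$. This replaces the subpath $u_i \to x_i \to v_i \to z_i$ of the $E_C$-cycle by $u_i \to v_i \to x_i \to z_i$, keeping the rest of the cycle intact; pairwise distinctness of $u_i, x_i, v_i$ and of $x_i, z_i$ follows from the absence of self-loops in $E_C$ (the only remaining coincidence, $z_i = u_i$ when the underlying $E_C$-cycle has length three, just produces the valid $3$-cycle $u_i \to v_i \to x_i \to u_i$). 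The loss is then at most $b_{o(u_i)} + b_{\iota(v_i)} + b_{o(v_i)}$.

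Finally I sum the per-$e_i$ bounds. Because $\pi_{C'}$ is a bijection on $V$, every vertex appears exactly once as some $u_i$ and exactly once as some $v_i$, so $\sum_{e_i \in E_{C'}} b_{o(u_i)} = \sum_{u \in V} b_{o(u)} = W^b(\mathcal{C})$, and likewise $\sum_{e_i \in E_{C'}} b_{\iota(v_i)} = W^b(\mathcal{C})$. The extra contribution from degenerate indices, $\sum_{\text{degenerate } e_i} b_{o(v_i)}$, is also at most $W^b(\mathcal{C})$ since the $v_i$'s appearing there are distinct, making the edges $o(v_i)$ distinct elements of $E_C$. Adding the three pieces yields the desired bound of $3\, W^b(\mathcal{C})$. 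The subtle step in the whole argument is the three-edge swap: isolating precisely when the baseline swap fails, and then designing a replacement that charges only one additional $E_C$-edge per degenerate $e_i$, is what keeps the overall constant at $3$ rather than blowing up.
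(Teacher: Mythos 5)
Your proof is correct and essentially identical to the paper's: your baseline two-edge swap and the three-edge swap for the degenerate case $y_i=x_i$ are exactly the paper's Case 1 and Case 2 modifications, and your direct summation over the three ``roles'' ($o(u_i)$, $\iota(v_i)$, and $o(v_i)$ in the degenerate case) is the deterministic rephrasing of the paper's argument that a uniformly random $e\in E_{C'}$ removes any fixed edge with probability at most $3/\lvert E_{C'}\rvert$. (Only a cosmetic nit: $x_i\neq z_i$ follows from the in-degree-one constraint at $x_i$ rather than from the absence of self-loops, as the paper's Case 2 notes.)
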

\begin{proof}
Let us first consider any fixed edge $e \in E$ and let us construct a cycle cover from the edge set $E_C$ that contains $e$.

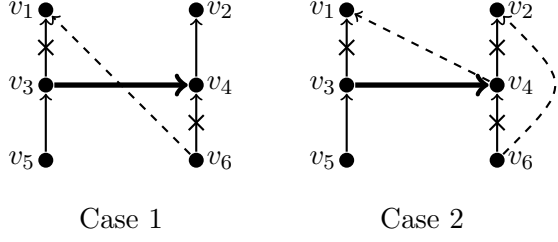
\begin{figure}
\center
\scalebox{1.0}{
\begin{tikzpicture}[thick]%
    \draw (1,-0.5) node[below](case2-label){Case 1};
    \draw (0,0) node[circle,fill,inner sep=2pt](v5){};
    \draw (0,0) node[left](v5-label){$v_5$};
    \draw (0,1) node[circle,fill,inner sep=2pt](v3){};
    \draw (0,1) node[left](v3-label){$v_3$};
    \draw (0,2) node[circle,fill,inner sep=2pt](v1){};
    \draw (0,2) node[left](v1-label){$v_1$};
    \draw (2,0) node[circle,fill,inner sep=2pt](v6){};
    \draw (2,0) node[right](v6-label){$v_6$};
    \draw (2,1) node[circle,fill,inner sep=2pt](v4){};
    \draw (2,1) node[right](v4-label){$v_4$};
    \draw (2,2) node[circle,fill,inner sep=2pt](v2){};
    \draw (2,2) node[right](v2-label){$v_2$};
    \draw[->, line width=2pt] (v3) -- (v4);
    \draw[->] (v5) -- (v3);
    \draw[->] (v3) -- (v1);
    \draw[->] (v6) -- (v4);
    \draw[->] (v4) -- (v2);
    \draw[dashed,->] (v6) -- (v1);
    \draw (-0.1,1.6) -- (0.1,1.4);
    \draw (-0.1,1.4) -- (0.1,1.6);
    \draw (1.9,0.6) -- (2.1,0.4);
    \draw (1.9,0.4) -- (2.1,0.6);
    \draw (5,-0.5) node[below](case2-label){Case 2};
    \draw (4,0) node[circle,fill,inner sep=2pt](v5){};
    \draw (4,0) node[left](v5-label){$v_5$};
    \draw (4,1) node[circle,fill,inner sep=2pt](v3){};
    \draw (4,1) node[left](v3-label){$v_3$};
    \draw (4,2) node[circle,fill,inner sep=2pt](v1){};
    \draw (4,2) node[left](v1-label){$v_1$};
    \draw (6,0) node[circle,fill,inner sep=2pt](v6){};
    \draw (6,0) node[right](v6-label){$v_6$};
    \draw (6,1) node[circle,fill,inner sep=2pt](v4){};
    \draw (6,1) node[right](v4-label){$v_4$};
    \draw (6,2) node[circle,fill,inner sep=2pt](v2){};
    \draw (6,2) node[right](v2-label){$v_2$};
    \draw[->, line width=2pt] (v3) -- (v4);
    \draw[->] (v5) -- (v3);
    \draw[->] (v3) -- (v1);
    \draw[->] (v6) -- (v4);
    \draw[->] (v4) -- (v2);
    \draw[dashed,->] (v4) -- (v1);
    \draw[dashed,->] (v6) .. controls(7,1) .. (v2);
    \draw (3.9,1.6) -- (4.1,1.4);
    \draw (3.9,1.4) -- (4.1,1.6);
    \draw (5.9,0.6) -- (6.1,0.4);
    \draw (5.9,0.4) -- (6.1,0.6);
    \draw (5.9,1.6) -- (6.1,1.4);
    \draw (5.9,1.4) -- (6.1,1.6);
\end{tikzpicture}
}
\vspace{-5pt}
\caption{Smoothness of Cycle Cover LP}\label{fig:cycle-cover}
\vspace{-5pt}
\end{figure}

Figure \ref{fig:cycle-cover} depicts the two possible cases. The thin edges are edges from $E_C$. The thick edge is $e$. Note that we can w.l.o.g.~assume that $v_1$ and $v_4$ are distinct. (As otherwise $e$ would already be contained in $E_C$ and there would be nothing to show.)

The first case is when nodes $v_1$ and $v_6$ are distinct. In this case we can remove edges $(v_3,v_1)$ and $(v_6,v_4)$ from $E_C$ and add edge $(v_6,v_1)$. The resulting edge set is a valid cycle cover because this modification to $E_C$ maintains the in-/outdegrees of all nodes and does not create self-loops as we assumed $v_1$ and $v_6$ to be distinct.

The second case is when nodes $v_1$ and $v_6$ are identical. In this case the modification just described would create a self-loop from/to node $v_1 = v_6$. We can avoid this by instead removing edges $(v_3,v_1)$, $(v_6,v_4)$, and $(v_4,v_2)$ from $E_C$ and adding edges $(v_4,v_1)$ and $(v_6,v_2)$. This leads to a valid cycle cover as it maintains in- and outdegrees and does not create self-loops as we assumed $v_1$ and $v_6$ to be identical and $v_2$ must be different from $v_1$ (as otherwise this node would have an in degree of $2$, which would violate the cycle cover constraint).

In order to bound $\sum_{i \in N: e_i \in E_{C'}} \left(W^{b_{-i}}(\mathcal{C}) - W^{b_{-i}}(\mathcal{C}_{e_i}) \right)$, we now assume that $e$ was drawn uniformly at random from $E_{C'}$, the set of all edges in $C'$. As edge weights are non-negative, the loss in declared welfare by forcing $e$ into $C$, i.e., $W^{b_{-i}}(\mathcal{C}) - W^{b_{-i}}(\mathcal{C}_{e})$ is upper-bounded by the weight of all edges removed by our construction. For any edge $e'$, the probability of being removed by our construction is at most $\frac{3}{\lvert E_{C'} \rvert}$. This is due to the fact that $e'$ is only removed if it fulfills a certain role in relation to $e$, namely being the edge $(v_3, v_1)$, $(v_4, v_2)$, or $(v_6, v_4)$ in Figure~\ref{fig:cycle-cover}. As $C'$ is a cycle cover, each edge $e'$ can have each role only with respect to a single $e \in E_{C'}$. Overall, this gives
\[
\Ex{W^{b_{-i}}(\mathcal{C}) - W^{b_{-i}}(\mathcal{C}_{e})} \leq \sum_{e' \in E_{C}} w_{e'} \frac{3}{\lvert E_{C'} \rvert} = \frac{3}{\lvert E_{C'} \rvert} W^b(\mathcal{C}).
\]
Using the definition of the expectation, and multiplying the inequality by $\lvert E_{C'} \rvert$ shows the claim.
\end{proof}

\begin{proof}[Proof of Lemma~\ref{lem:cc}]
Following the same steps as in the proof of Lemma~\ref{lem:pip-smooth} and using Lemma~\ref{lem:cc-smooth} instead of Lemma~\ref{lem:lemma-4.7-lps} completes the proof.
\end{proof}

\section{Combinatorial Auctions}
\label{sec:cas}

In this section, we consider combinatorial auctions (CAs). In a CA, $m$ items are sold to $n$ bidders. Each item is allocated to at most one bidder and each bidder $i$ has a valuation $v_i(S)$ for the subset $S \subseteq [m]$ of items he receives. The canonical relaxation as a {\em configuration LP}  uses variables $x_{i, S} \in [0, 1]$ representing the fraction that bidder $i$ receives of set $S$. The goal is to maximize $\sum_{i \in N} \sum_{S \subseteq [m]} b_i(S) x_{i, S}$ s.t.~$\sum_{i \in N} \sum_{S: j \in S} x_{i, S} \leq 1$ for all $j \in [m]$ and $\sum_{S} x_{i, S} \leq 1$ for all $i \in N$.

For arbitrary valuation functions, only very poor approximation factors can be achieved for the optimization problem. Therefore, we focus on XOS or fractionally subadditive valuations. That is, each valuation function $v_i$ has a representation of the following form. There are values $v_{i, j}^\ell \geq 0$ such that $v_i(S) = \max_\ell \sum_{j \in S} v_{i, j}^\ell$.
Feige et al.~\cite{FeigeFIILS14} very recently generalized the class of XOS functions to $\mathcal{MPH}$-$k$, where XOS is precisely the case $k = 1$. A valuation function $v_i$ belongs to class $\mathcal{MPH}$-$k$ if there are values $v_{i, T}^\ell \geq 0$ such that $v_i(S) = \max_\ell \sum_{T \subseteq S, \lvert T \rvert \leq k} v_{i, T}^\ell$.

\begin{theorem}\label{thm:poa-ca}
There is a pay-your-bid mechanism that is based on oblivious rounding and achieves a Price of Anarchy of $4\frac{e}{e-1}$ for XOS-valuations and of $O(k^2)$ for $\mathcal{MPH}$-$k$-valuations.
\end{theorem}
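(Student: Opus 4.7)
The plan is to prove both halves of Theorem~\ref{thm:poa-ca} by an application of the Main Theorem (Theorem~\ref{thm:smooth}) to the configuration LP of the combinatorial auction, using (a) an existing oblivious rounding scheme and (b) a smoothness lemma for the LP-based pay-your-bid mechanism. On the XOS side, the rounding is Feige's $e/(e-1)$-approximation~\cite{Feige09}, which is oblivious because it inspects only the LP allocation probabilities, not the reported values. For $\mathcal{MPH}$-$k$, the analogous scheme of Feige et al.~\cite{FeigeFIILS14} is $O(k)$-approximate and oblivious. Once combined with the smoothness parameters $(1/2,1)$ and $(1/2,k+1)$ respectively, the corollary following Theorem~\ref{thm:main} yields the PoA bounds $2\alpha\beta = 4\,e/(e-1)$ and $2\alpha\beta = O(k)\cdot 2(k+1) = O(k^2)$.

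The main work is therefore the smoothness lemma. For XOS I will use the additive-witness deviation borrowed from the item-bidding literature: given true valuations $v$, fix an optimal integral allocation $(OPT_1,\dots,OPT_n)$, and for each $i$ let $\alpha_i$ be an XOS supporting function with $\alpha_i(OPT_i)=v_i(OPT_i)$ and $\alpha_i(T)\le v_i(T)$ for all $T$. I take the deviating bundle bid $b'_i(S)=\tfrac{1}{2}\,\alpha_i(S\cap OPT_i)$, which is additive over items of $OPT_i$. Since $b'_i\le \tfrac{1}{2}v_i$ pointwise, the payment-minus-value argument gives $u_i((b'_i,b_{-i}),v_i)\ge b'_i(\tilde{x}_i)$ where $\tilde{x}$ is the LP optimum against $(b'_i,b_{-i})$. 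Comparing this LP value against the feasible candidate that injects the integral configuration $OPT_i$ for $i$ and keeps the remaining bidders at their $b$-optimal allocation (with fractional mass on items of $OPT_i$ evicted) produces a telescoping inequality in the style of Lemma~\ref{lem:lemma-4.7-lps}. Because each item of $[m]$ lies in at most one $OPT_i$, the column-sparsity parameter is $d=1$, and summing over bidders yields $(1/2,1)$-smoothness for deviations to half the value.

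For $\mathcal{MPH}$-$k$ the same blueprint extends, now built on the $k$-tuple representation $v_i(S)=\max_\ell\sum_{T\subseteq S,|T|\le k} v^\ell_{i,T}$. Let $\ell^*_i$ realize $v_i(OPT_i)$; the deviation becomes $b'_i(S)=\tfrac{1}{2}\sum_{T\subseteq S\cap OPT_i,|T|\le k} v^{\ell^*_i}_{i,T}$, which again satisfies $b'_i\le \tfrac{1}{2}v_i$, so the utility bound $u_i\ge b'_i(\tilde{x}_i)$ is inherited. The difference appears in the aggregation step: a single item $j\in OPT_j$ can now participate in up to $k$ different $k$-tuples of another bidder's decomposition, so when we charge evicted LP mass back to the bid payments of the other bidders, each unit of payment may be counted up to $k$ times. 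This is precisely the sparsity accounting of Lemma~\ref{lem:lemma-4.7-lps} with $d=k$, and it gives $\mu=k+1$ while preserving $\lambda=1/2$.

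The step I expect to be the main obstacle is this last piece of bookkeeping for $\mathcal{MPH}$-$k$: one has to design the candidate LP solution used in the comparison so that evicting the $k$-tuples of $OPT_i$ from the other bidders' allocations is genuinely feasible and so that the resulting loss in declared welfare admits the clean ``charged at most $k+1$ times'' bound. Once the smoothness is in place, composing it via Theorem~\ref{thm:smooth} with the respective oblivious rounding schemes yields the two PoA bounds claimed in Theorem~\ref{thm:poa-ca}.
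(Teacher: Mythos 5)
Your overall architecture is the paper's: Feige's $e/(e-1)$-approximate rounding for XOS and the $O(k)$-approximate rounding of \cite{FeigeFIILS14} for $\mathcal{MPH}$-$k$, combined with a smoothness lemma for the configuration-LP pay-your-bid mechanism and the relax-and-round composition theorem. But there is a genuine gap in the smoothness bookkeeping, and it changes your XOS constant. Your justification that ``the column-sparsity parameter is $d=1$ because each item lies in at most one $OPT_i$'' conflates two different things: column sparsity in \lemref{lem:lemma-4.7-lps} is the number of LP constraints a single variable touches, and for the configuration LP a variable $x_{i,S}$ touches $\lvert S\rvert$ item constraints, so that lemma does not apply with $d=1$ (or with any useful $d$). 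The correct accounting, which is the content of \lemref{lem:lemma4.7-mphk}, instead exploits the XOS/$\mathcal{MPH}$-$k$ supporting decomposition of the \emph{other bidders' bids}: one builds the evicted solution $\hat{x}^{-i}_{i',S}=\sum_A x_{i,A}\sum_{U:S=U\setminus A}\hat{x}_{i',U}$, checks its feasibility, and charges each supporting $k$-tuple at most $k$ times across deviators, plus one extra $W^b$ term because $W^{b_{-i}}(1)$ can exceed the others' share of $W^b(1)$. This yields $\mu=k+1$, i.e.\ $(1/2,2)$-smoothness for XOS, not the $(1/2,1)$ you assume; your sketch gives no argument that the ``$+1$'' and the per-item charge can both be avoided. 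This matters numerically: since your deviation is the additive witness $\tfrac12\alpha_i(\,\cdot\,\cap OPT_i)$ rather than $b'_i=\tfrac12 v_i$, you route through the factor-$2\alpha$ corollary of \thmref{thm:main}, and with the correct $\mu=2$ that pipeline gives $8e/(e-1)$ for XOS, not the claimed $4e/(e-1)$. The paper avoids the extra factor $2$ by proving smoothness directly for the deviation $b'_i=\tfrac12 v_i$ (following the \lemref{lem:pip-smooth} pattern) and invoking \thmref{thm:smooth}, whose proof uses the equality $p_i(b'_i,b_{-i})=\tfrac12 v_i(f(b'_i,b_{-i}))$ on the rounded outcome; with a witness deviation that is only pointwise at most $\tfrac12 v_i$ this step would have to be reworked (it can be, by applying obliviousness to the weight $w_i=b'_i$, but you neither do this nor note the need).

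A second, smaller point: you explicitly defer the construction and feasibility/charging analysis of the evicted LP solution for $\mathcal{MPH}$-$k$ as ``the main obstacle,'' but that construction \emph{is} the proof---without it neither the $(1/2,k+1)$ bound nor the XOS case ($k=1$) is established. For $\mathcal{MPH}$-$k$ your asymptotic claim $O(k^2)$ survives whichever composition theorem is used, so that half of the statement is fine modulo the missing lemma; the XOS constant $4\tfrac{e}{e-1}$, however, is not delivered by the argument as you have set it up.
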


For general $\mathcal{MPH}$-$k$-valuations \citet{FeigeFIILS14} present an $O(k+1)$-approximate oblivious rounding scheme; a better constant of $\frac{e}{e-1}$ for the special case XOS  can be achieved via the rounding scheme described in \citet{Feige09}. 
Regarding smoothness we show below that the configuration LP that these rounding schemes are based on is $(1/2,k+1)$-smooth for deviations to $b'_i = \frac{1}{2} v_i$. The claimed Price of Anarchy bounds then follow from Theorem~\ref{thm:smooth}.

As in the previous applications the key lemma in the smoothness proof is the following lemma that bounds the net loss of enforcing a feasible solution one player at a time.
For a bid profile $b$ and a vector of quantities $q$ let $W^b(q)$ denote the optimal declared social welfare over all fractional allocations, constrained by capacity vector $q$.


\begin{lemma}\label{lem:lemma4.7-mphk}
Let $x$ be an arbitrary fractional solution to the configuration LP. Then, 
\[ 
	\sum_{i \in N} \left(W^{b_{-i}}(1) - W^{b_{-i}}(1 - x_i ) \right)\leq (k + 1) \cdot W^{b}(1).
\]
\end{lemma}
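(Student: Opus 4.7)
The plan is to mirror the structure of Lemma~\ref{lem:lemma-4.7-lps} for sparse packing IPs, but since the configuration LP itself is \emph{not} column-sparse (variable $\hat{x}_{i,S}$ appears in $|S|+1$ constraints, which may be as large as $m+1$), I will instead exploit the $\mathcal{MPH}$-$k$ structure of the bids to obtain effective sparsity at the level of size-$\le k$ witness blocks. The technical heart of the argument is a random item-dropping construction of a feasible LP solution to the reduced problem $(b_{-j}, 1 - x_j)$.

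First I would fix an optimal LP solution $\hat{x}$ for bids $b$ and, for every $(i,S)$ with $\hat{x}_{i,S} > 0$, choose a witness $\ell^\star = \ell^\star(i,S)$ realizing $b_i(S) = \sum_{T \subseteq S,\, |T| \le k} b_{i,T}^{\ell^\star}$. Define $y_j^e := \sum_{S \ni e} x_{j,S}$; feasibility of $x$ gives $y_j^e \in [0,1]$ and $\sum_j y_j^e \le 1$. For each bidder $j$, I would construct $\hat{x}^{-j}$ as follows: for every $i \ne j$ and every $S$ in the support of $\hat{x}_i$, keep each $e \in S$ independently with probability $1 - y_j^e$, and set $\hat{x}^{-j}_{i,S'}$ equal to the expected mass landing at sub-bundle $S'$, summing contributions over all parents $S \supseteq S'$.

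Feasibility is then immediate: item $e$ survives with probability exactly $1 - y_j^e$, so $\sum_{i \ne j,\, S' \ni e} \hat{x}^{-j}_{i,S'} \le 1 - y_j^e$, and the per-bidder constraint is preserved because random dropping never increases total per-bidder mass. The value analysis is where $\mathcal{MPH}$-$k$ enters: I would lower-bound $b_i(S') \ge \sum_{T \subseteq S',\,|T| \le k} b_{i,T}^{\ell^\star}$ using the witness as a candidate, and then apply the Weierstrass inequality $\prod_{e \in T}(1-y_j^e) \ge 1 - \sum_{e \in T} y_j^e$ (valid since $y_j^e \in [0,1]$) to obtain
\[
\E[b_i(S')] \;\ge\; b_i(S) \;-\; \sum_{T \subseteq S,\,|T| \le k} b_{i,T}^{\ell^\star} \sum_{e \in T} y_j^e.
\]
Denoting the subtracted term by $\Delta^j_{i,S}$, this gives $W^{b_{-j}}(1-x_j) \ge \sum_{i \ne j,S} b_i(S)\hat{x}_{i,S} - \sum_{i \ne j,\, S} \Delta^j_{i,S}$.

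Combining with the trivial bound $W^{b_{-j}}(1) \le W^b(1)$ and summing over $j$, the ``direct removal'' contributions $\sum_j \sum_S b_j(S)\hat{x}_{j,S}$ add up to exactly $W^b(1)$, while the block-damage terms satisfy
\[
\sum_{j} \sum_{i \ne j,\, S} \Delta^j_{i,S} \;\le\; \sum_{i,S} \hat{x}_{i,S} \sum_{T \subseteq S,\,|T| \le k} b_{i,T}^{\ell^\star}\, |T| \;\le\; k \cdot W^b(1),
\]
using $\sum_{j \ne i} y_j^e \le 1$ and $|T| \le k$. Together this delivers the claimed $(k+1)\,W^b(1)$ bound. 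The main obstacle is the value analysis of the rounded solution: any uniform scaling of $\hat{x}_{i,S}$ would cost a factor growing in $|S|$ (the column sparsity of the LP), but the independent item-wise drop combined with the $\mathcal{MPH}$-$k$ block decomposition and Weierstrass bound cleanly restricts the loss to one unit per item \emph{per block of size $\le k$}, which is exactly what brings the factor down from $|S|$ to $k$.
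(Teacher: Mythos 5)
Your proof is correct and takes essentially the same route as the paper's: you construct a feasible solution for the reduced problem $(b_{-j}, 1-x_j)$ by probabilistically deleting items from the optimal solution $\hat{x}$ according to bidder $j$'s fractional usage, invoke the $\mathcal{MPH}$-$k$ witness decomposition, and bound each size-$\le k$ block's damage by a union bound $\sum_{e \in T} y_j^e \le |T| \le k$, which after summing over $j$ yields $(k+1)\,W^b(1)$. The only (immaterial) difference is that the paper deletes a correlated bundle $A$ drawn from the distribution $x_j$ rather than dropping items independently at the marginals $y_j^e$; since the survival marginals $1-y_j^e$ coincide, both feasibility and the per-block loss bound are identical, and your Weierstrass step plays exactly the role of the paper's bound $\sum_{A: A\cap T \neq \emptyset} x_{j,A} \le \sum_{e\in T} \sum_{A \ni e} x_{j,A}$.
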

\begin{proof}
As removing a player can only decrease the declared welfare that can be achieved, we have $\sum_{i \in N} W^{b_{-i}}(1) \le \sum_{i \in N} W^b(1)$. 
It now remains to show that $\sum_{i \in N} W^{b_{-i}}(1-x_i) \ge (n-k-1) \sum_{i \in N} W^b(1)$. Subtracting this inequality from the first one implies then the claim.

Let $\hat{x}$ denote a fractional allocation that maximizes declared welfare for players $N$ with bids $b$, so $\sum_S b_i(S) \hat{x}_{i, S} = W^b(1)$. Let $\hat{b}_{i, T}$ be the corresponding values such that 
\[
W^b(1) = \sum_S b_i(S) \hat{x}_{i, S} = \sum_S \left( \sum_{T \subseteq S, \lvert T \rvert \leq k} \hat{b}_{i, T} \right) \hat{x}_{i, S} \enspace.
\]

In order to bound $W^{b_{-i}}(1-x_i)$ for a fixed $i$, we will turn $\hat{x}$ into a feasible solution $\hat{x}^{-i}$ for the more restricted constraint capacities $1-x_i$. To simplify notation, we assume that for every $i$ we have $\sum_{A} x_{i, A} = 1$. This is possible without loss of generality as we can increase $x_{i, \emptyset}$ without modifying the objective function or feasibility. The LP solution $\hat{x}^{-i}$ is now defined by setting $\hat{x}^{-i}_{i', S} = \sum_{A} x_{i, A} \sum_{U: S = U \setminus A} \hat{x}_{i', U}$. 

The first step is to show feasibility of this solution. For all $A \subseteq [m]$ and $j \in [m]$, we have
\[
\sum_{i' \neq i} \sum_{S: j \in S} \sum_{U: S = U \setminus A} \hat{x}_{i', U} = \begin{cases} 0 & \text{ if $j \in A$} \\ \sum_{i' \neq i} \sum_{U: j \in U} \hat{x}_{i', U} & \text{ if $j \not\in A$}.
\end{cases}
\]
This implies that for all $j \in [m]$
\begin{align*}
\sum_{i' \neq i} \sum_{S: j \in S} \hat{x}^{-i}_{i', S} & = \sum_{i' \neq i} \sum_{S: j \in S} \sum_{A} x_{i, A} \sum_{U: S = U \setminus A} \hat{x}_{i', U} \\
& = \sum_{A} x_{i, A} \sum_{i' \neq i} \sum_{S: j \in S} \sum_{U: S = U \setminus A} \hat{x}_{i', U} \\
& = \sum_{A: j \not\in A} x_{i, A} \sum_{i' \neq i} \sum_{U: j \in U} \hat{x}_{i', U} \enspace.
\end{align*}
By feasibility of $\hat{x}$, we have $\sum_{i' \neq i} \sum_{S: j \in S} \sum_{U: S = U \setminus A} \hat{x}_{i', U}$. Furthermore, as we assumed $\sum_A x_{i, A} = 1$, we also have $\sum_{A: j \not\in A} x_{i, A} = 1 - \sum_{A: j \in A} x_{i, A}$. Therefore $\sum_{i' \neq i} \sum_{S: j \in S} \hat{x}^{-i}_{i', S} \leq 1 - \sum_{A: j \in A} x_{i, A}$. That is, $\hat{x}^{-i}$ is a feasible solution with respect to the capacity vector $q = 1 - x_i$.

Next, we bound the value of this constructed solution $\hat{x}^{-i}$. Let us first consider the contribution to the declared welfare by player $i' \neq i$ in $\hat{x}^{-i}$. We get
\begin{align*}
\sum_S b_{i'}(S) \hat{x}^{-i}_{i', S} & \geq \sum_S \left( \sum_{T \subseteq S, \lvert T \rvert \leq k} \hat{b}_{i', T} \right) \hat{x}^{-i}_{i', S}\\
& = \sum_S \left( \sum_{T \subseteq S, \lvert T \rvert \leq k} \hat{b}_{i', T} \right) \sum_{A} x_{i, A} \sum_{U: S = U \setminus A} \hat{x}_{i', U} \\
& = \sum_{A} x_{i, A} \sum_U \left( \sum_{T \subseteq U \setminus A, \lvert T \rvert \leq k} \hat{b}_{i', T} \right) \hat{x}_{i', U} \enspace.
\end{align*}
Taking the sum over all $i' \neq i$, this implies
\[
W^{b_{-i}}(1-x_i) \geq \sum_{i' \neq i} \sum_{A} x_{i, A} \sum_U \left( \sum_{T \subseteq U \setminus A, \lvert T \rvert \leq k} \hat{b}_{i', T} \right) \hat{x}_{i', U} \enspace.
\]

In the remainder, we will bound the sum of all $W^{b_{-i}}(1-x_i)$ and bound it in terms of $W^b(1)$. Using the bound on $W^{b_{-i}}(1-x_i)$ obtained so far and reordering the sums, we get
\begin{align*}
\sum_{i} W^{b_{-i}}(1-x_i) & \geq \sum_{i'} \sum_{i \neq i'} \sum_{A} x_{i, A} \sum_U \left( \sum_{T \subseteq U \setminus A, \lvert T \rvert \leq k} \hat{b}_{i', T} \right) \hat{x}_{i', U} \\
& = \sum_{i'} \sum_U \left( \sum_{i \neq i'} \sum_{A} x_{i, A} \sum_{T \subseteq U \setminus A, \lvert T \rvert \leq k} \hat{b}_{i', T} \right) \hat{x}_{i', U} \enspace.
\end{align*}
By reordering the sums further, we get
\begin{align*}
\sum_{i \neq i'} \sum_{A} x_{i, A} \sum_{T \subseteq U \setminus A, \lvert T \rvert \leq k} \hat{b}_{i', T} & = \sum_{T \subseteq U, \lvert T \rvert \leq k} \hat{b}_{i', T}\sum_{i \neq i'} \sum_{A: A \cap T = \emptyset} x_{i, A} \\
& = \sum_{T \subseteq U, \lvert T \rvert \leq k} \hat{b}_{i', T} \left( \sum_{i \neq i'} \sum_{A} x_{i, A} - \sum_{i \neq i'} \sum_{A: A \cap T \neq \emptyset} x_{i, A} \right) \enspace.
\end{align*}
As we assumed $\sum_{A} x_{i, A} = 1$ for all $i$, we have
\[
\sum_{i \neq i'} \sum_{A} x_{i, A} = n - 1 \enspace.
\]
Furthermore, we use feasibility of $x$ and the fact that $\lvert T \rvert \leq k$. This implies
\[
\sum_{i \neq i'} \sum_{A: A \cap T = \emptyset} x_{i, A} \leq \sum_{j \in T} \sum_{i \neq i'} \sum_{A: j \in A} x_{i, A} \leq \lvert T \rvert \leq k \enspace.
\]
Overall, this implies
\[
\sum_{i} W^{b_{-i}}(1-x_i) \geq (n - k - 1) \sum_{i} \sum_S b_{i}(S) \hat{x}_{i, S} = (n - k - 1) W^b(1) \enspace.
\]
As $W^b(1) \geq W^{b_{-i}}(1)$ for all $i$, this shows the claim.\qquad
\end{proof}

\begin{lemma}\label{lem:ca-smooth}
The pay-your-bid mechanism that solves the configuration LP for $\mathcal{MPH}-k$ valuations exactly is $(1/2,d+1)$-smooth for deviations to $b'_i = \frac{1}{2}v_i.$
\end{lemma}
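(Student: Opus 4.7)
The plan is to follow the same template used in the proof of Lemma~\ref{lem:pip-smooth}, substituting the packing-LP inequality (Lemma~\ref{lem:lemma-4.7-lps}) with its configuration-LP analogue (Lemma~\ref{lem:lemma4.7-mphk}). Since the mechanism is pay-your-bid and optimizes declared welfare over the configuration LP, the structure of the smoothness argument carries over essentially verbatim.

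First, I would fix valuations $v$, an arbitrary bid profile $b$, and consider the unilateral deviation $b'_i = \tfrac{1}{2} v_i$. Let $\hat{x}(b'_i, b_{-i})$ denote the optimal LP allocation against this deviation and $\hat{x}(v)$ the optimal LP allocation when everyone bids $v$. Because the mechanism is pay-your-bid, the utility of agent $i$ under the deviation is
\[
u_i((b'_i, b_{-i}), v_i) = v_i(\hat{x}_i(b'_i, b_{-i})) - b'_i(\hat{x}_i(b'_i, b_{-i})) = b'_i(\hat{x}_i(b'_i, b_{-i})).
\]

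Next, I would use the optimality of the LP solver. Since $\hat{x}(b'_i, b_{-i})$ maximizes declared welfare against bid profile $(b'_i, b_{-i})$, comparing it to the feasible allocation that gives agent $i$ the bundle $\hat{x}_i(v)$ and fills the remaining capacity $1 - \hat{x}_i(v)$ optimally with bids $b_{-i}$ yields
\[
b'_i(\hat{x}_i(b'_i, b_{-i})) + W^{b_{-i}}(1) \;\geq\; b'_i(\hat{x}_i(v)) + W^{b_{-i}}(1 - \hat{x}_i(v)),
\]
so that $u_i((b'_i, b_{-i}), v_i) \geq b'_i(\hat{x}_i(v)) - \bigl(W^{b_{-i}}(1) - W^{b_{-i}}(1 - \hat{x}_i(v))\bigr)$.

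Summing over $i \in N$, the first term on the right becomes $\tfrac{1}{2}\sum_i v_i(\hat{x}_i(v)) = \tfrac{1}{2} OPT(v)$. For the telescoping loss term, I would apply Lemma~\ref{lem:lemma4.7-mphk} with the fractional solution $x = \hat{x}(v)$, which gives
\[
\sum_{i \in N} \bigl(W^{b_{-i}}(1) - W^{b_{-i}}(1 - \hat{x}_i(v))\bigr) \;\leq\; (k+1) \cdot W^b(1) \;=\; (k+1) \sum_{i \in N} b_i(\hat{x}_i(b)) \;=\; (k+1) \sum_{i \in N} p_i(b),
\]
where the last equality uses that $M$ is pay-your-bid. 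Combining these two bounds gives the $(1/2, k+1)$-smoothness condition for the deviation $b'_i = \tfrac{1}{2}v_i$, completing the proof.

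I do not expect any real obstacle here: Lemma~\ref{lem:lemma4.7-mphk} has already absorbed all the technical work specific to $\mathcal{MPH}$-$k$ valuations (the construction of $\hat{x}^{-i}$ by carving out each bundle $A$ with weight $x_{i,A}$, and the use of $\lvert T \rvert \leq k$ to bound the fractional conflicts), so what remains is only the routine smoothness bookkeeping that is identical to the PIP case. The only thing worth flagging is the statement of the lemma, which says $(1/2, d+1)$ but in this $\mathcal{MPH}$-$k$ context should be read as $(1/2, k+1)$, consistent with Theorem~\ref{thm:poa-ca} and with Lemma~\ref{lem:lemma4.7-mphk}.
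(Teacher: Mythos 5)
Your proposal is correct and follows exactly the route the paper intends: its proof of this lemma is literally "follow the same steps as Lemma~\ref{lem:pip-smooth}, using Lemma~\ref{lem:lemma4.7-mphk} in place of Lemma~\ref{lem:lemma-4.7-lps}," and your expansion of those steps (pay-your-bid utility equals $b'_i(\hat{x}_i(b'_i,b_{-i}))$, exact LP optimality against the hybrid solution, then the $(k+1)$-bound on the telescoping loss) is the same argument spelled out. Your remark that the stated $(1/2,d+1)$ should read $(1/2,k+1)$ is also right and consistent with Theorem~\ref{thm:poa-ca}.
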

\begin{proof}
Following the same steps as in the proof of Lemma~\ref{lem:pip-smooth} and using Lemma~\ref{lem:lemma4.7-mphk} instead of Lemma~\ref{lem:lemma-4.7-lps} completes the proof.\qquad
\end{proof}

\section{Extensions}
\label{sec:extensions}
Throughout this paper, we focused on pay-your-bid payment schemes. However, all of our results generalize to payment schemes that use arbitrary non-negative payments which are upper bounded by the respective bid. In this case, we resort to weak smoothness \cite{SyrgkanisT13}. In our statements $(\lambda, \mu)$-smoothness would be replaced by weak $(\lambda, 0, \mu)$-smoothness. Considering equilibria without overbidding, i.e., always $b_i(x) \leq v_i(x)$, this implies a PoA bound of $(1+\mu)/\lambda$.

Furthermore, Theorem~\ref{thm:main} also holds if $f'$ is not an exact declared welfare maximizer, but only allows implementation as a truthful mechanism. The interesting consequence is that it might make sense to only approximately solve the relaxation if this improves the smoothness guarantees. For example, a packing LP can be solved using the fractional-overselling mechanism in \cite{Hoefer2013}, which was originally introduced in \cite{Krysta2012}. The allocation rule is an $O(\log n + \log L)$-approximation for any packing LP with $n$ bidders and $L$ constraints between bidders. It allows implementation as a truthful mechanism but  it is also a greedy algorithm in the sense of \cite{LucierBorodin10}. Therefore, the respective pay-your-bid mechanism is $(\frac{1}{O(\log n + \log L)}, 1)$-smooth. This means that combining this algorithm with any $\alpha$-approximate oblivious rounding scheme for the respective packing LP, we get a pay-your-bid mechanism with Price of Anarchy at most $O(\alpha(\log n + \log L))$.

Finally, Carr and Vempala \cite{CarrVempala02} introduced randomized metarounding, which is a technique to derive oblivious rounding schemes from non-oblivious ones. Lavi and Swamy \cite{LaviS05} used this result to construct truthful mechanisms. However, they additionally need a packing structure. As in our case oblivious rounding is enough, any rounding scheme derived from the original version in \cite{CarrVempala02} is enough for our considerations.

\section{Discussion}
\label{sec:discussion}
In this paper we have shown that algorithms that follow the relax-and-round paradigm and 
whose rounding is oblivious have a very desirable property: Namely, if the rounding scheme is 
$\alpha$-approximate and the relaxation has a Price of Anarchy via smoothness of $O(\beta)$, 
then the resulting relax-and-round mechanism has a Price of Anarchy of $O(\alpha\beta)$
provable via smoothness.

Two aspects that we did not touch upon are equilibrium existence and the computational 
complexity of computing an equilibrium. The former is particularly relevant for pure equilibrium concepts,
such as pure Nash equilibria or pure Bayes-Nash equilibria. The latter has been shown to be a problem
for Bayes-Nash equilibria in simultaneous first-price auctions \cite{CaiP14}.

Our foremost intended application is to repeated settings, where regret minimization converges
to a coarse correlated equilibrium in polynomial-time. In fact, we do not even need vanishingly small 
regret---we only need that agents have no regret for deviations to half their value. This argument
readily applies to settings of incomplete information, showing near-optimal system performance even 
out of equilibrium. We consider the availability of such simple fall-back strategies as a major advantage of 
direct mechanisms over indirect mechanisms, where bidders typically have to solve a non-trivial
problem to figure out good strategies.

In terms of future work, it would be interesting to identify further algorithm design paradigms that
translate approximation guarantees into Price of Anarchy guarantees or, more generally, to obtain
a combinatorial characterization of algorithms with low price of anarchy. A first step towards this 
direction is \cite{DuttingK15}, which provides such a characterization for single-parameter settings.

\section*{Acknowledgements}
\'Eva Tardos is supported in part by NSF grants CCF-0910940 and CCF-1215994, ONR grant N00014-08-1-0031, a Yahoo! Research Alliance Grant, and a Google Research Grant.

\bibliographystyle{abbrvnat}
\bibliography{abb,approximation}

\appendix

\section{Improved Bound for Traveling Salesman Problem}
\label{app:tsp}
In this appendix we show how to improve the Price of Anarchy bound from $12$ to $9$ by using the algorithm of Paluch et al.~\citet{PaluchEZ12} instead.

The idea behind the algorithm of Paluch et al.~\citet{PaluchEZ12} is to compute a cycle cover that does not contain cycles of length $2$. However, computing such a cycle cover would be NP-hard, therefore Paluch et al.~relax the constraints further by allowing so-called half-edges. Informally spoken, in a cycle cover without $2$-cycles but with half-edges, each vertex still has in- and out-degree exactly $1$. The difference is that the orientation of one edges does not need to be consistent between its two endpoints. That is, the edge can can serve as the ingoing (or outgoing) for both endpoints. However, it is not possible that a single edge serves as both ingoing and outgoing edge for a vertex.

Formally, these constraints can be modeled as follows. A pair of directed edges $(u, v)$ and $(v, u)$ in the original graph is represented by an additional vertex $v_{u, v}$ and directed edges $(u, v_{u, v})$, $(v_{u, v}, v)$, $(v, v_{u, v})$, and $(v_{u, v}, u)$. In a cycle cover without $2$-cycles but with half-edges, either none or exactly two of these edges are included and every (original) vertex has indegree and outdegree exactly $1$.

Paluch et al.~show that a cycle cover without $2$-cycles but with half-edges can be computed in polynomial time. Furthermore, it is possible to derive three Hamiltonian cycles such that each half-edge is included in at least one of these cycles. Choosing one of these at random, we get an oblivious rounding scheme.

A similar argument as the one used to establish Lemma \ref{lem:cc-smooth} can be used to establish a bound for cycle covers without $2$-cycles but with half edges. Due to a single player now controlling two edges, however, the resulting bound can be as bad as $6$. So this would only yield a Price of Anarchy of $18$ via Theorem \ref{thm:smooth}. We therefore deviate from our proof pattern and show that the relaxation is $(1/2,3)$-smooth for deviations to $b'_i = \frac{1}{2}v_i$ if we only compare to a subset of the cycle covers without two-cycles but with half edges, namely the set of all tours. This turns out to be sufficient for Theorem~\ref{thm:smooth} to go through.

\begin{lemma}\label{lem:half-edges}
Consider bids $b$. Let $C$ be the cycle cover without 2-cycles but with half edges that maximizes reported welfare for bids $b$ and use $E_{C}$ to denote the set of edges used in this cycle cover. Consider any tour $T$ with edge set $E_{T}$. Then,
\[
	\sum_{i \in N: e_i \in E_{T}} \left(W^{b_{-i}}(\mathcal{C}) - W^{b_{-i}}(\mathcal{C}_{e_i}) \right) \le 3 \cdot W^b(\mathcal{C}).
\]
\end{lemma}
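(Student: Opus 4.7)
The plan is to mimic the proof of Lemma~\ref{lem:cc-smooth}, carefully adapted to the richer structure of cycle covers without 2-cycles but with half-edges. As before, I would fix an arbitrary edge $e = (u,v) \in E_T$ and, starting from $E_C$, construct an alternative feasible solution containing $e$ using only local modifications around the endpoints $u$ and $v$. After bounding the total weight of edges removed by these modifications and averaging $e$ uniformly over $E_T$, this yields a bound of $3/|E_T|$ on the probability that any fixed edge $e' \in E_C$ is removed; multiplying through by $|E_T|$ recovers the claimed inequality exactly as in the proof of Lemma~\ref{lem:cc-smooth}.

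The case analysis would split according to the shape of the outgoing edge out of $u$ and the incoming edge into $v$ in $E_C$: each either leads directly to/from another original vertex or passes through a half-edge auxiliary vertex. In the direct-direct case, the two-subcase rewiring used for Lemma~\ref{lem:cc-smooth} applies verbatim, removing at most three edges of $E_C$. When half-edges are present, I would first free the offending endpoint by deleting the full pair of auxiliary edges forming the half-edge, then close the resulting degree deficiencies before inserting $e$. In every case, at most three ``role-edges'' of $E_C$ at the endpoints of $e$ are removed, in- and out-degree $1$ at every original vertex is maintained, and no new $2$-cycle is created; the latter uses that $C$ was already free of $2$-cycles and that $e$ is a single directed edge rather than a half-edge pair.

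The counting step is where the hypothesis that $T$ is a \emph{tour} becomes crucial. Any fixed edge $e' \in E_C$ can be removed only if it plays one of three designated roles at an endpoint of $e$, analogous to being edge $(v_3,v_1)$, $(v_6,v_4)$, or $(v_4,v_2)$ in Figure~\ref{fig:cycle-cover}. Because $T$ is a Hamiltonian cycle, each vertex is the tail (resp.\ head) of exactly one edge of $T$, so $e'$ plays each role for at most one $e \in E_T$, giving the factor $3/|E_T|$. If $T$ were permitted to be an arbitrary cycle cover with half-edges, a single player could instead contribute two directed edges at once (forming a half-edge pair), both of which one would have to insert into $E_C$, which is exactly the doubling that would push the bound to the $6$ mentioned in the preceding discussion. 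The main obstacle is the careful enumeration of the half-edge subcases at $u$ and $v$ and verifying feasibility in each; once this is in place, the averaging argument transfers directly from Lemma~\ref{lem:cc-smooth}.
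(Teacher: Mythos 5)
Your high-level plan (insert one tour edge $e$ into $E_C$ by a local rewiring, bound the removed weight, then average $e$ uniformly over $E_T$ and use that each edge of $E_C$ can play each of three roles for at most one $e\in E_T$ because $T$ is a tour) is exactly the paper's strategy, and your explanation of why the tour hypothesis avoids the factor-$6$ loss is correct. However, there is a genuine gap in the rewiring step. Your claim that in the ``direct--direct'' case the two-subcase construction of Lemma~\ref{lem:cc-smooth} applies \emph{verbatim} is false: the relaxed feasible set now excludes $2$-cycles, and the Case-1 patch of Lemma~\ref{lem:cc-smooth}, which adds an edge between $v_1$ and $v_6$ (in the notation of Figure~\ref{fig:cycle-cover}), creates a forbidden $2$-cycle whenever $C$ already contains an edge between $v_1$ and $v_6$ --- a situation that can occur even when $v_1\neq v_6$ and all edges involved are ordinary directed edges. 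Your stated justification (``$C$ was already free of $2$-cycles and $e$ is a single directed edge'') does not cover this, because the danger comes from the \emph{patch} edges interacting with surviving edges of $C$, not from $e$ or from pre-existing $2$-cycles. The paper's proof is organized around precisely this point: Case 1 requires both $v_1\neq v_6$ \emph{and} that no edge of $C$ joins $v_1$ and $v_6$, while Case 2 (remove the three role edges, add connections between $v_4,v_1$ and between $v_6,v_2$) absorbs both $v_1=v_6$ and the existing-edge situation, with separate degree arguments showing the Case-2 patches themselves cannot close a $2$-cycle.

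A secondary issue: your case split on whether the incident edges of $E_C$ at the endpoints of $e$ are full edges or half-edges is not the relevant distinction, and your recipe for the half-edge case (``delete the full pair of auxiliary edges forming the half-edge, then close the resulting degree deficiencies'') is underspecified --- deleting that pair also disturbs the other original endpoint of the half-edge, and you would need to verify that the number of removed original edges stays at three and that the repair exists. The paper instead treats full and half edges uniformly: the three removed edges are always the out-edge of $v_3$, the in-edge of $v_4$, and (in Case 2) the out-edge of $v_4$, and the half-edge freedom is used only to orient the newly added patch edges arbitrarily so that they match whatever in/out deficiencies remain (the ``or'' orientations in Figure~\ref{fig:half-edges}); this is exactly why the relaxation with half-edges, rather than genuine $2$-cycle-free cycle covers, makes the construction feasible. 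You would also need the paper's preliminary reduction handling $v_1=v_4$ (w.l.o.g.\ $v_1\neq v_6$ in that case) before the two cases are exhaustive.
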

\begin{proof}
We begin by showing how to incorporate a given edge $e \in E$ from the tour into the cycle cover $C$ without $2$-cycles but with half edges by using as many edges from $E_C$ as possible.

Figure \ref{fig:half-edges} depicts the two base cases that can occur. The thin edges are edges from $E_C$. The thick edge is $e$.
Note that if we have $v_1 = v_4$ then we can w.l.o.g.~assume that $v_1 \neq v_6$. (As in the interesting case where $E_C$ contains $(v_3,v_{3,4})$ and $(v_4,v_{4,3})$ having $v_1 = v_6$ would mean that the node $v_1 = v_4 = v_6$ would only have one outgoing edge, which would contradict the fact that $E_C$ is a valid cycle cover without $2$-cycles but with half edges.) Hence the case $v_1 = v_4$ is covered by Case 1 below; and in Case $2$ we can assume that $v_1 \neq v_4.$

The first case is when nodes $v_1$ and $v_6$ are distinct and there is no edge between $v_1$ to $v_6$ in the cycle cover. In this case we can remove edges $(v_3,v_1)$ and $(v_6,v_4)$ from $E_C$ and add an edge between $v_1$ and $v_6$. We direct the half-edges in this edge so that they fit the in-/outgoing edges at $v_1$ and $v_6$.
The resulting edge set is a valid cycle cover without $2$-cycles but with half-edges because this modification to $E_C$ maintains the in-/outdegrees of all nodes, does not create self-loops (as we assumed that $v_1$ and $v_6$ are distinct), and does not create a $2$-cycle (as we assumed that the edge $v_1$ to $v_6$ is not contained in the cycle cover).

The second case is when either there is an edge from $v_1$ to $v_6$ in the cycle cover or the nodes $v_1$ and $v_6$ are identical. In this case the modification just described would either create a self-loop from/to node $v_1 = v_6$ or a $2$-cycle from $v_1$ to $v_6$ and back.
What we can do instead is remove edges $(v_3,v_1)$, $(v_6,v_4)$, and $(v_4,v_2)$ from $E_C$ and add edges between $v_4$ and $v_1$ and between $v_6$ and $v_2$. We direct the half-edges in these edges so that they fit with the in/outgoing edges at $v_1$, $v_2$ and $v_6$. This leads to a valid cycle cover without $2$-cycles but with half edges as it maintains in- and outdegrees and does not create self-loops or a $2$-cycle. (For the former note that in this case $v_1 \neq v_4$ and that $v_2 = v_6$ would imply the existence of a $2$-cycle. For the latter observe that we need not be worried about an edge between $v_1$ and $v_4$ if $v_1 = v_6$ as we remove the existing edge between these nodes. If $v_1 \neq v_6$ then such an edge cannot exist as it would mean $v_1$ had three incident edges. Similarly, an edge between $v_2$ and $v_6$ would mean $v_6$ had three incident edges.)

By the same argument as in Lemma~\ref{lem:cc-smooth} we obtain
\[
	\sum_{i \in N: e_i \in E_{T}} \left(W^{b_{-i}}(\mathcal{C}) - W^{b_{-i}}(\mathcal{C}_{e_i}) \right) \le 3 \cdot W^b(\mathcal{C}). \qedhere
\]
\end{proof}

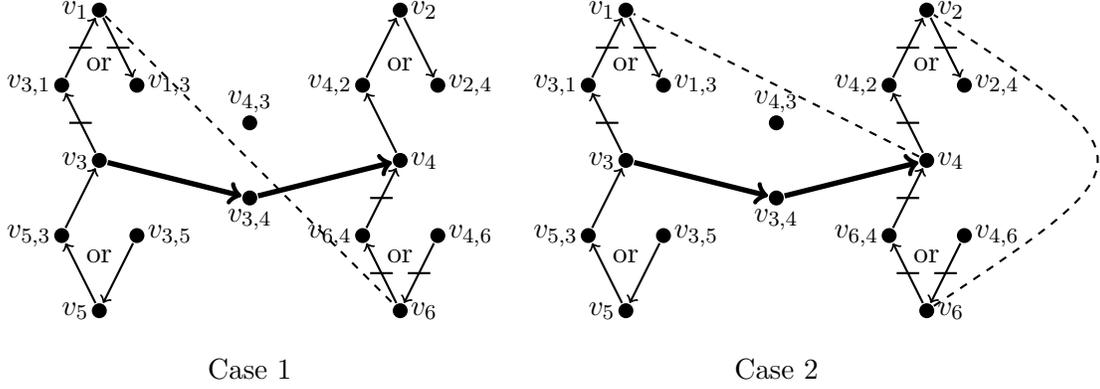
\begin{figure}
\center
\hspace{14pt}\mbox{\scalebox{1.0}{
\begin{tikzpicture}[thick]%
    \draw (2,-0.5) node[below](case2-label){Case 1};
    \draw (0,0) node[circle,fill,inner sep=2pt](v5){};
    \draw (0,0) node[left](v5-label){$v_5$};
    \draw (-0.5,1) node[circle,fill,inner sep=2pt](v53){};
    \draw (-0.5,1) node[left](v53-label){$v_{5,3}$};
    \draw (0.5,1) node[circle,fill,inner sep=2pt](v35){};
    \draw (0.5,1) node[right](v35-label){$v_{3,5}$};
    \draw (0,2) node[circle,fill,inner sep=2pt](v3){};
    \draw (0,2) node[left](v3-label){$v_3$};
    \draw (-0.5,3) node[circle,fill,inner sep=2pt](v31){};
    \draw (-0.5,3) node[left](v31-label){$v_{3,1}$};
    \draw (0.5,3) node[circle,fill,inner sep=2pt](v13){};
    \draw (0.5,3) node[right](v13-label){$v_{1,3}$};
    \draw (0,4) node[circle,fill,inner sep=2pt](v1){};
    \draw (0,4) node[left](v1-label){$v_1$};
    \draw (4,0) node[circle,fill,inner sep=2pt](v6){};
    \draw (4,0) node[right](v6-label){$v_6$};
    \draw (3.5,1) node[circle,fill,inner sep=2pt](v64){};
    \draw (3.5,1) node[left](v64-label){$v_{6,4}$};
    \draw (4.5,1) node[circle,fill,inner sep=2pt](v46){};
    \draw (4.5,1) node[right](v46-label){$v_{4,6}$};
    \draw (4,2) node[circle,fill,inner sep=2pt](v4){};
    \draw (4,2) node[right](v4-label){$v_4$};
    \draw (3.5,3) node[circle,fill,inner sep=2pt](v42){};
    \draw (3.5,3) node[left](v42-label){$v_{4,2}$};
    \draw (4.5,3) node[circle,fill,inner sep=2pt](v24){};
    \draw (4.5,3) node[right](v24-label){$v_{2,4}$};
    \draw (4,4) node[circle,fill,inner sep=2pt](v2){};
    \draw (4,4) node[right](v2-label){$v_2$};
    \draw (2,1.5) node[circle,fill,inner sep=2pt](v34){};
    \draw (2,1.5) node[below](v34-label){$v_{3,4}$};
    \draw (2,2.5) node[circle,fill,inner sep=2pt](v43){};
    \draw (2,2.5) node[above](v43-label){$v_{4,3}$};
    \draw[->] (v5) -- (v53);
    \draw[->] (v53) -- (v3);
    \draw[->] (v35) -- (v5);
    \draw[->] (v3) -- (v31);
    \draw[->] (v31) -- (v1);
    \draw[->] (v2) -- (v24);
    \draw[->] (v1) -- (v13);
    \draw (0,3.5) node[below](or){or};
    \draw (4,0.5) node[above](or){or};
    \draw (0,0.5) node[above](or){or};
    \draw (4,3.5) node[below](or){or};
    \draw[->] (v6) -- (v64);
    \draw[->] (v64) -- (v4);
    \draw[->] (v4) -- (v42);
    \draw[->] (v42) -- (v2);
    \draw[->] (v46) -- (v6);
    \draw[->, line width = 2pt] (v3) -- (v34);
    \draw[->, line width = 2pt] (v34) -- (v4);
    \draw[-, dashed] (v1) -- (v6);
    \draw (-0.4,3.5) -- (-0.1,3.5);
    \draw (0.1,3.5) -- (0.4,3.5);
    \draw (-0.4,2.5) -- (-0.1,2.5);
    \draw (3.6,0.5) -- (3.9,0.5);
    \draw (3.6,1.5) -- (3.9,1.5);
    \draw (4.1,0.5) -- (4.4,0.5);

    \draw (9,-0.5) node[below](case2-label){Case 2};
    \draw (7,0) node[circle,fill,inner sep=2pt](v5){};
    \draw (7,0) node[left](v5-label){$v_5$};
    \draw (6.5,1) node[circle,fill,inner sep=2pt](v53){};
    \draw (6.5,1) node[left](v53-label){$v_{5,3}$};
    \draw (7.5,1) node[circle,fill,inner sep=2pt](v35){};
    \draw (7.5,1) node[right](v35-label){$v_{3,5}$};
    \draw (7,2) node[circle,fill,inner sep=2pt](v3){};
    \draw (7,2) node[left](v3-label){$v_3$};
    \draw (6.5,3) node[circle,fill,inner sep=2pt](v31){};
    \draw (6.5,3) node[left](v31-label){$v_{3,1}$};
    \draw (7.5,3) node[circle,fill,inner sep=2pt](v13){};
    \draw (7.5,3) node[right](v13-label){$v_{1,3}$};
    \draw (7,4) node[circle,fill,inner sep=2pt](v1){};
    \draw (7,4) node[left](v1-label){$v_1$};
    \draw (11,0) node[circle,fill,inner sep=2pt](v6){};
    \draw (11,0) node[right](v6-label){$v_6$};
    \draw (10.5,1) node[circle,fill,inner sep=2pt](v64){};
    \draw (10.5,1) node[left](v64-label){$v_{6,4}$};
    \draw (11.5,1) node[circle,fill,inner sep=2pt](v46){};
    \draw (11.5,1) node[right](v46-label){$v_{4,6}$};
    \draw (11,2) node[circle,fill,inner sep=2pt](v4){};
    \draw (11,2) node[right](v4-label){$v_4$};
    \draw (10.5,3) node[circle,fill,inner sep=2pt](v42){};
    \draw (10.5,3) node[left](v42-label){$v_{4,2}$};
    \draw (11.5,3) node[circle,fill,inner sep=2pt](v24){};
    \draw (11.5,3) node[right](v24-label){$v_{2,4}$};
    \draw (11,4) node[circle,fill,inner sep=2pt](v2){};
    \draw (11,4) node[right](v2-label){$v_2$};
    \draw (9,1.5) node[circle,fill,inner sep=2pt](v34){};
    \draw (9,1.5) node[below](v34-label){$v_{3,4}$};
    \draw (9,2.5) node[circle,fill,inner sep=2pt](v43){};
    \draw (9,2.5) node[above](v43-label){$v_{4,3}$};
    \draw[->] (v5) -- (v53);
    \draw[->] (v53) -- (v3);
    \draw[->] (v35) -- (v5);
    \draw[->] (v3) -- (v31);
    \draw[->] (v31) -- (v1);
    \draw[->] (v2) -- (v24);
    \draw[->] (v1) -- (v13);
    \draw (7,3.5) node[below](or){or};
    \draw (11,0.5) node[above](or){or};
    \draw (7,0.5) node[above](or){or};
    \draw (11,3.5) node[below](or){or};
    \draw[->] (v6) -- (v64);
    \draw[->] (v64) -- (v4);
    \draw[->] (v4) -- (v42);
    \draw[->] (v42) -- (v2);
    \draw[->] (v46) -- (v6);
    \draw[->, line width = 2pt] (v3) -- (v34);
    \draw[->, line width = 2pt] (v34) -- (v4);
    \draw[-, dashed] (v4) -- (v1);
    \draw[-, dashed] (v6) .. controls(14,2) .. (v2);
    \draw (6.6,3.5) -- (6.9,3.5);
    \draw (7.1,3.5) -- (7.4,3.5);
    \draw (6.6,2.5) -- (6.9,2.5);
    \draw (10.6,0.5) -- (10.9,0.5);
    \draw (11.1,0.5) -- (11.4,0.5);
    \draw (10.6,1.5) -- (10.9,1.5);
    \draw (10.6,3.5) -- (10.9,3.5);
    \draw (11.1,3.5) -- (11.4,3.5);
    \draw (10.6,2.5) -- (10.9,2.5);
\end{tikzpicture}
}\hspace{-14pt}
}
\vspace{-5pt}
\caption{Smoothness of Cycle Cover LP without 2-Cycles but with Half Edges}\label{fig:half-edges}
\vspace{-5pt}
\end{figure}

\begin{lemma}
Consider valuation profile $v$ and bid profile $b$.  Denote the welfare achieved by the welfare maximizing tour by $OPT_T(v)$. Then for the pay-your-bid mechanism that computes an optimal cycle cover without two-cycles but with half edges and bids $b'_i = \frac{1}{2}v_i$ for all $i \in N$, 
\[
	\sum_{i \in N} u_i((b'-i,b_{-i}),v_i) \ge \frac{1}{2} OPT_T(v) - \mu \cdot \sum_{i \in N} p_i(b).
\]
\end{lemma}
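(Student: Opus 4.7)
The plan is to mirror the argument used in Lemmas~\ref{lem:pip-smooth} and \ref{lem:cc}, but substitute Lemma~\ref{lem:half-edges} for the combinatorial ingredient and restrict the benchmark sum to edges of the optimal tour.

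First I would unpack the utilities. Let $\bar{x}(b'_i,b_{-i})$ denote the optimal cycle cover without $2$-cycles but with half edges under bid profile $(b'_i,b_{-i})$. Since each bidder controls a single edge and $b'_i=\tfrac{1}{2}v_i$, one immediately gets
\[
u_i((b'_i,b_{-i}),v_i) \;=\; \tfrac{1}{2}v_i\cdot \mathbf{1}[e_i\in \bar{x}(b'_i,b_{-i})] \;=\; b'_i\cdot\mathbf{1}[e_i\in\bar{x}(b'_i,b_{-i})],
\]
which is always non-negative.

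Next I would exploit optimality of $\bar{x}(b'_i,b_{-i})$. Fix an edge $e_i$ that belongs to the welfare-maximizing tour $T^\ast$ for valuations $v$. Since $\bar{x}(b'_i,b_{-i})$ maximizes declared welfare over all admissible covers, and since $W^{b_{-i}}(\mathcal{C}_{e_i})$ is the best declared welfare from the other bidders subject to forcing $e_i$ into the cover, comparing to this forced cover gives
\[
b'_i\cdot\mathbf{1}[e_i\in\bar{x}(b'_i,b_{-i})] + W^{b_{-i}}(\mathcal{C}) \;\ge\; b'_i + W^{b_{-i}}(\mathcal{C}_{e_i}),
\]
so that $b'_i\cdot\mathbf{1}[e_i\in\bar{x}(b'_i,b_{-i})]\ge b'_i - \bigl(W^{b_{-i}}(\mathcal{C})-W^{b_{-i}}(\mathcal{C}_{e_i})\bigr)$. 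This is exactly the same ``net loss'' bookkeeping used in Lemma~\ref{lem:pip-smooth} and Lemma~\ref{lem:cc}, except restricted to those $i$ for which $\mathcal{C}_{e_i}$ is nonempty---which is guaranteed whenever $e_i$ lies on some tour, and in particular on $T^\ast$.

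Finally I would sum the inequality over all $i$ with $e_i\in T^\ast$. The first right-hand term sums to $\sum_{e_i\in T^\ast}\tfrac{1}{2}v_i = \tfrac{1}{2}OPT_T(v)$. The second sums to at most $3\cdot W^b(\mathcal{C})$ by Lemma~\ref{lem:half-edges} applied with the tour $T^\ast$ in place of $T$. Since $W^b(\mathcal{C})=\sum_{i\in N}p_i(b)$ (pay-your-bid on the computed cover), and since utilities are non-negative so that restricting the sum to $i$ with $e_i\in T^\ast$ only loses non-negative terms, one obtains
\[
\sum_{i\in N} u_i((b'_i,b_{-i}),v_i) \;\ge\; \tfrac{1}{2}OPT_T(v) - 3\sum_{i\in N} p_i(b),
\]
which is the claimed inequality with $\mu=3$.

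The only conceptually delicate point---and the one where this proof genuinely diverges from the pattern of the earlier sections---is the restriction of the outer sum to edges of $T^\ast$ rather than to the entire player set. This is forced on us because a single player now effectively controls ``both orientations'' of a pair of endpoints in the half-edge formulation, which inflates the naive bookkeeping to a constant of $6$. Using $T^\ast$ instead of an arbitrary cover as the comparison object sidesteps this inflation, at the mild cost that the smoothness inequality only certifies welfare against the best tour; but since the oblivious rounding ultimately produces a tour, this suffices to plug into Theorem~\ref{thm:smooth} and recover the claimed PoA of $9$. No other step requires genuinely new ideas beyond what appears in Lemmas~\ref{lem:cc} and \ref{lem:half-edges}.
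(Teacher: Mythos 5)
Your proof is correct and follows essentially the same route as the paper: the paper's own proof is exactly the argument of Lemma~\ref{lem:pip-smooth} with the benchmark switched to the optimal original solution (the welfare-maximizing tour) and with Lemma~\ref{lem:half-edges} replacing Lemma~\ref{lem:lemma-4.7-lps}, which is precisely what you spell out, including the pay-your-bid identity $W^b(\mathcal{C})=\sum_{i\in N}p_i(b)$ and the non-negativity of utilities for players outside the tour. Your closing remark about why one must compare against the best tour rather than the relaxed optimum (to avoid the factor $6$, with the corresponding minor adjustment in applying Theorem~\ref{thm:smooth}) also matches the paper's discussion.
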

\begin{proof}
The proof is analogous to the proof of Lemma~\ref{lem:pip-smooth}. The only two differences are that (1) instead of switching to the optimal relaxed solution for $v$ we switch to the optimal original solution for $v$ and (2) we then apply Lemma~\ref{lem:half-edges} instead of Lemma~\ref{lem:lemma-4.7-lps} to bound the social cost $\sum_{i \in N: e_i \in E_{T}} (W^{b_{-i}}(\mathcal{C}) - W^{b_{-i}}(\mathcal{C}_{e_i})$ of enforcing the optimal solution $T$.\qquad
\end{proof}

The only change to the proof of Theorem~\ref{thm:smooth} is that we do not need the extra step of lower bounding the optimal relaxed solution with the optimal original solution when we apply the smoothness guarantee of the mechanism for the relaxation.

\end{document}